\newtheorem{theorem}{Theorem}[section]
\newtheorem{Proposition}[theorem]{Proposition}
\newtheorem{Corollary}[theorem]{Corollary}
\newcommand{\bX}{\boldsymbol{X}}
\newcommand{\bx}{\boldsymbol{x}}
\newcommand{\bgamma}{\boldsymbol{\gamma}}
\newcommand{\ba}{\boldsymbol{a}}
\newcommand{\bb}{\boldsymbol{b}}
\newcommand{\bA}{\boldsymbol{A}}
\newcommand{\bB}{\boldsymbol{B}}
\newcommand{\bzero}{\boldsymbol{0}}
\newcommand{\transpose}{^{\text{\sffamily T}}}
\newcommand{\inverse}{^{-1}}
\newcommand{\Esp}[1]{\mathbb{E}\left[ #1 \right]}    
\newcommand{\Var}[1]{\text{Var}\left[ #1 \right]}    
\newcommand{\Cov}[1]{\text{Cov}\left[ #1 \right]}   
\DeclarePairedDelimiter\floor{\lfloor}{\rfloor}
\begin{document}

\title{Macro vs. Micro Methods \\ in Non-Life Claims Reserving \\ (an Econometric Perspective)}
\author{Arthur Charpentier \& Mathieu Pigeon	}
\date{February, 28th, 2016}
\thanks{Arthur Charpentier, UQAM-Quantact \& Universit\'e de Rennes 1, France.\\
~~~~ Mathieu Pigeon, UQAM-Quantact.}

\maketitle
%%%%%%%%%%%%%%%%%%%%%%%%%%%%%%%%%%%%%%%%%%

\section{Introduction}

\subsection{Macro and Micro Methods}

For more than a century, actuaries have been using run-off triangles to project future payments, in non-life insurance. In the 30's, \cite{Astesan} formalized this technique that originated the popular {\em chain ladder} estimate. In the 90's, \cite{Mack} proved that the {\em chain ladder} estimate can be motivated by a simple stochastic model, and later on \cite{EnglandVerrall} provided a comprehensive overview on stochastic models that can be connected with the {\em chain ladder} method, included regression models, that could be seen as extension of the so-called "factor" methods used in the 70's.

But using the terminology of \cite{vanEeghen} and \cite{Taylor}, those were macro-level models for reserving. In the 70's, \cite{Karlsson} suggested to used some marked point process of
claims to project future payments, and quantify the reserves. More recently, \cite{Arjas},\cite{Jewell}, \cite{Norberg93}, \cite{Hesselager} or \cite{Norberg99} (among many others) investigated further some probabilistic micro-level models. These models handle claims related data on an individual basis,
rather than aggregating by underwriting year and development period. As mentioned in \cite{HESSELAGERVERRALL}, these methods have not (yet) found great popularity in practice, since they are more difficult to apply.

All macro-level models are based on aggregate data found in a run-off triangle, which is their strength, but also probably their weakness. They are easy to understand, and can be mentioned in financial communication, without disclosing too much information. From a computational perspective, those models can also be implemented in a single spreadsheet. But recently, \cite{EnglandVerrall} started questioning actuaries about possible use of detailed micro-level information. Those models can incorporate heterogeneity, structural changes, etc (see \cite{Friedland} for a discussion).

\subsection{Best estimates and Variability}

In the context of macro-level models, \cite{EnglandVerrall} mention that prediction errors can be large, because of the small number of {\em observations} used in run-off triangles. Quantifying uncertainty in claim reserving methods is not only important in actuarial practice and to assess accuracy of predictive models, it is also a regulatory issue.

\cite{Ant} and \cite{Pig2} obtained, on real data analysis, lower variance on the total amount of reserves with "micro" models than with "macro" ones. A natural question is about the generality of such result. Should "micro" model generate less variability than standard "macro" ones? That is the question that initiated that paper.

\subsection{Agenda}

In section~\ref{sec2}, we detail intuitive results we expect when
aggregating data by clusters, moving from micro-level models to
macro-level ones. More precisely, we explain why with a linear model
and a Poisson regression, macro- and micro-level models are
equivalent. We also discuss the case of the Poisson regression model
with random intercept. In section~\ref{sec3}, we study "micro" and "macro" models in the context of claims reserving, on real data, as well as simulated ones.

%%%%%%%%%%%%%%%%%%%%%%%%%%%%%%%%%%%%%%%%%%

\section{Clustering in Generalized Linear Mixed Models}\label{sec2}

In the economic literature, several papers discuss the use of "micro" vs. "macro" data, for instance in the context of unemployment duration in \cite{vandenBerga} or in the context of inflation in \cite{Altissimo}. In \cite{vandenBerga}, it is mentioned that both models are interesting, since "micro" data can be used to capture heterogeneity while "macro" data can capture cycle and more structural patterns. In \cite{Altissimo}, it is demonstrated that both heterogeneity and aggregation might explain the persistence of inflation at the macroeconomic level. 

In order to clarify notation, and make sure that objects are well defined, we use small letters for sample values, e.g. $y_i$, and capital letters for underlying random variables, e.g. $Y_i$ in the sense that $y_i$ is a realisation of random variable $Y_i$. Hence, in the case of the linear model (see Section \ref{subsec:linear}), we usually assume that $Y_i\sim\mathcal{N}(\boldsymbol{x}_i\transpose\bb,\sigma^2)$, and then
$\widehat{\boldsymbol{b}}$ is the estimated model, in the sense that $\widehat{\boldsymbol{b}}=(\boldsymbol{x}\boldsymbol{x}\transpose)\inverse\boldsymbol{x}\boldsymbol{y}$ while $\widehat{\boldsymbol{B}}=(\boldsymbol{x}\boldsymbol{x}\transpose)\inverse\boldsymbol{x} \boldsymbol{Y}$ (here covariates $\boldsymbol{x}$ are given, and non stochastic). Since $\widehat{\boldsymbol{B}}$ is seen as a random variable, we can write $\mathbb{E}[\widehat{\boldsymbol{B}}]=\boldsymbol{b}$.

With a Poisson regression, $Y_i\sim\mathcal{P}(\lambda_i)$ with $\lambda_i=\exp[\boldsymbol{x}_i\transpose\boldsymbol{b}]$. In that case,
$\text{Var}[Y_i]=\mathbb{E}[Y_i]=\lambda_i$. The estimate parameter $\widehat{\boldsymbol{b}}$ is a function of the observations, $(\boldsymbol{x}_i,\boldsymbol{y}_i)$'s, while $\widehat{\boldsymbol{B}}$ is a function of the observations, $(\boldsymbol{x}_i,\boldsymbol{Y}_i)$'s. In the context of the Poisson regression, recall that $\mathbb{E}[\widehat{\boldsymbol{B}}]\rightarrow \boldsymbol{b}$ as $n$ goes to infinity. With a quasi-Poisson regression, $Y_i$ does not have, per se, a proper distribution. Nevertheless, its moments are well defined, in the sense that $\text{Var}[Y_i]=\varphi\mathbb{E}[Y_i]=\varphi\lambda_i$. And for convenience, we will denote  $Y_i\sim q\mathcal{P}(\lambda_i)$, with an abuse of notation.

In this section, we will derive some theoretical results regarding aggregation in econometric models. 

\subsection{The multiple linear regression model}\label{subsec:linear}

Consider a (multiple) linear regression model,

\begin{align}\label{eq1}
y_{i,g} &= \bx_{g}\transpose\ba + \varepsilon_{i,g},\\
\ba &= \begin{bmatrix} a_1 & \ldots & a_{k+1} \end{bmatrix}\transpose \qquad \bx_{g} = \begin{bmatrix} x_{g,1} & \ldots & x_{g, k+1}\end{bmatrix}\transpose,\nonumber
\end{align}

where observations belong to a cluster $g$ and are indexed by $i$ within a cluster $g$, $i = 1, \ldots, n_g$, $g = 1, \ldots, m$. 
Assume further assumptions of the classical linear regression model \cite{Greene}, i.e.,
\begin{itemize}
\item[(LRM1)] no multicollinearity in the data matrix;
\item[(LRM2)] exogeneity of the independent variables $\Esp{\varepsilon_{i,g} | \bx_g} = 0$, $i = 1, \ldots, n_g$, $g = 1, \ldots , m$; and
\item[(LRM3)] homoscedasticity and nonautocorrelation of error terms with $\Var{\varepsilon_{i,g}} = \sigma^2$.
\end{itemize}
Stacking observations within a cluster yield the following model

\begin{align}\label{eq2}
    \overline{y}_{g} &= \bx_{g}\transpose\bb+e_{g},\intertext{where}
    \overline{y}_g &= \frac{1}{n_g}\sum_i y_{i,g} \text{ and }
\bb = \begin{bmatrix} b_1 & \ldots & b_{k+1} \end{bmatrix}\transpose \nonumber
\end{align}

with similar assumptions except for $\Var{e_g} = \sigma^2/n_g$. Those two models are equivalent, in the sense that the following proposition holds.

\begin{Proposition}
Model~(\ref{eq1}) on a micro level and model~(\ref{eq2}) on a macro level are equivalent, in the sense that
\begin{itemize}%[leftmargin=*,labelsep=4mm]
\item[(i)] $\widehat{\ba}_{OLS}=\widehat{\bb}_{OLS}$ when weights $n_g$ are used in model~(\ref{eq2}); and
\item[(ii)] $\displaystyle{\sum_{i,g} \widehat{y}_{i,g} = \sum_{g} \widehat{y}_{g}}$ where $y_g =n_g \overline{y}_g$.
\end{itemize}
\end{Proposition}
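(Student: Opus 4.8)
The plan is to exploit the single structural feature that makes the two specifications comparable: within a cluster $g$ the regressor $\bx_g$ is constant across $i$, so the micro design is nothing but the macro design with each row $\bx_g\transpose$ replicated $n_g$ times. Everything then reduces to comparing two sets of normal equations.

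First I would write the micro-level OLS problem as the minimization of $\sum_g\sum_{i=1}^{n_g}(y_{i,g}-\bx_g\transpose\ba)^2$. Differentiating, the normal equations read
\[
\Big(\sum_g n_g\,\bx_g\bx_g\transpose\Big)\widehat{\ba}_{OLS}=\sum_g n_g\,\bx_g\,\overline{y}_g,
\]
where I have used $\sum_{i=1}^{n_g}\bx_g\bx_g\transpose=n_g\bx_g\bx_g\transpose$ and $\sum_{i=1}^{n_g}y_{i,g}=n_g\overline{y}_g$. For the macro model the subscript ``OLS'' in (i) must be read as weighted least squares with weights $n_g$ — which is precisely the efficient (GLS) weighting dictated by $\Var{e_g}=\sigma^2/n_g$. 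That problem minimizes $\sum_g n_g(\overline{y}_g-\bx_g\transpose\bb)^2$, whose normal equations are
\[
\Big(\sum_g n_g\,\bx_g\bx_g\transpose\Big)\widehat{\bb}_{OLS}=\sum_g n_g\,\bx_g\,\overline{y}_g.
\]
The two systems coincide term by term; assumption (LRM1) guarantees that $\sum_g n_g\bx_g\bx_g\transpose$ is invertible, so $\widehat{\ba}_{OLS}=\widehat{\bb}_{OLS}$, which is (i).

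For (ii) I would note that the micro fitted value $\widehat{y}_{i,g}=\bx_g\transpose\widehat{\ba}_{OLS}$ does not depend on $i$, hence $\sum_{i=1}^{n_g}\widehat{y}_{i,g}=n_g\,\bx_g\transpose\widehat{\ba}_{OLS}$. On the macro side the fitted total attached to $y_g=n_g\overline{y}_g$ is $\widehat{y}_g=n_g\,\bx_g\transpose\widehat{\bb}_{OLS}$. Summing the former over $g$ and invoking (i) gives $\sum_{i,g}\widehat{y}_{i,g}=\sum_g n_g\bx_g\transpose\widehat{\ba}_{OLS}=\sum_g n_g\bx_g\transpose\widehat{\bb}_{OLS}=\sum_g\widehat{y}_g$. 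As a consistency check, when $\bx_g$ contains an intercept the first normal equation in each problem forces the (weighted) residuals to sum to zero, so both totals equal $\sum_g y_g$ independently.

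I do not expect a genuine obstacle here; the only place to be careful is the bookkeeping around the weights. The equality in (i) is \emph{not} true for unweighted macro OLS, and the whole point is that the weights $n_g$ simultaneously undo the replication count that distinguishes the micro from the macro design and coincide with the inverse-variance weighting implied by $\Var{e_g}=\sigma^2/n_g$. Making this dual role explicit, and stating clearly that $\widehat{y}_g$ denotes the fitted value of the cluster \emph{total} rather than of the cluster mean, is the main thing to get right.
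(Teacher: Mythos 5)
Your proof is correct, but it reaches part (i) by a different route than the paper. The paper never writes down first-order conditions: it adds and subtracts $\overline{y}_g$ inside the micro objective, expands
\[
\sum_{i,g}\left(y_{i,g}-\overline{y}_g+\overline{y}_g-\bx_g\transpose\ba\right)^2
=\sum_{i,g}(y_{i,g}-\overline{y}_g)^2+\sum_g n_g\left(\overline{y}_g-\bx_g\transpose\ba\right)^2,
\]
using that the cross term vanishes because $\sum_i(y_{i,g}-\overline{y}_g)=0$ within each cluster (the other factor being constant in $i$), and concludes that the micro objective equals the weighted macro objective plus a constant independent of $\ba$ — the classic within/between ANOVA decomposition — so the two programs have the same argmin. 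Your normal-equations argument instead shows both estimators solve $\left(\sum_g n_g\,\bx_g\bx_g\transpose\right)\widehat{\ba}=\sum_g n_g\,\bx_g\overline{y}_g$ and appeals to (LRM1) for invertibility. What each buys: the paper's objective-level identity is slightly more robust (equality of the two argmin sets holds even without invoking full rank, since the objectives differ by an additive constant), and it isolates the within-cluster sum of squares as exactly the information lost by aggregation; your version is more mechanical but makes explicit where uniqueness is needed and, usefully, makes the dual role of the weights $n_g$ transparent — they simultaneously undo the row replication of the micro design and coincide with the inverse-variance weighting implied by $\Var{e_g}=\sigma^2/n_g$, a point the paper leaves implicit. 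Your part (ii) is essentially identical to the paper's: fitted values are constant within clusters, so both totals reduce to $\sum_g n_g\,\bx_g\transpose\widehat{\ba}$ via (i); your added remark that both totals equal $\sum_g y_g$ when an intercept is present is a correct bonus observation not in the paper.
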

\begin{proof}
\begin{itemize}%[leftmargin=*,labelsep=4mm]
\item[(i)]
The ordinary least-squares estimator for $\ba$ - from
model~(\ref{eq1}) - is defined as

\begin{equation}
\widehat{\ba} = \underset{\ba}{\text{argmin}}\left\{
\sum_{i,g} \left(y_{i,g}-
\bx_{g}\transpose\ba\right)^2
\right\}
\end{equation}

which can also be written

\begin{equation}
\widehat{\ba} = \underset{\ba}{\text{argmin}}\left\{
\sum_{i,g} \left(y_{i,g}-\overline{y}_g+\overline{y}_g
-\bx_{g}\transpose\ba\right)^2
\right\}.
\end{equation}

Now, observe that

\begin{align*}
\sum_{i,g}  \left(y_{i,g}-\overline{y}_g+\overline{y}_g-
\bx_{g}\transpose\ba\right)^2
&=
\sum_{i,g}  (y_{i,g}-\overline{y}_g)^2+(\overline{y}_g-
\bX_{g}\transpose\ba)^2\\
&\phantom{=}+ 2 (y_{i,g}-\overline{y}_g)(\overline{y}_g-
\bx_{g}\transpose\ba),
\end{align*}

where the first term is independent of $\ba$ (and can be removed from the optimization program), and the term with cross-elements sums to 0. Hence,

\begin{equation}
\widehat{\ba} = \underset{\ba}{\text{argmin}}\left\{
\sum_{i,g} (\overline{y}_g-
\bx_{g}\transpose\ba)^2
\right\}=
\underset{\ba}{\text{argmin}}\left\{
\sum_{g} n_g (\overline{y}_g-
\bx_{g}\transpose\ba)^2
\right\}=\widehat{\bb},
\end{equation}

where $\widehat{\bb}$ is the least square estimator of $\bb$ from model~(\ref{eq2}), when weights $n_g$ are considered.
\item[(ii)] If we consider the sum of predicted values, observe that

\begin{equation}
\sum_{i,g} \widehat{y}_{i,g} = \sum_{g} n_g \bx_{g}\transpose\widehat{\ba} = 
\sum_{g} n_g \underbrace{\bx_{g}\transpose\widehat{\bb}}_{\widehat{\overline{y}}_{g}}=\sum_{g} \widehat{y}_{g} .
\end{equation}

Hence, the sum of predictions obtained from model~(\ref{eq1}) is the same as the sum of predictions obtained from model~(\ref{eq2}), even if partial sums are considered. 
\end{itemize}
\end{proof}
In the proposition above, the equality should be understood as the equality between estimators. Hence we have the following corollary.
\begin{Corollary}
We define the following matrices

\begin{align*}
\boldsymbol{Y}_g &= \begin{bmatrix} Y_{1,g} & \ldots & Y_{n_g, g} \end{bmatrix}\transpose \qquad \boldsymbol{Y} = \begin{bmatrix} \boldsymbol{Y}_1\transpose & \ldots & \boldsymbol{Y}_m\transpose \end{bmatrix}\transpose \qquad \overline{\boldsymbol{Y}} = \begin{bmatrix} \overline{Y}_1 & \ldots & \overline{Y}_m \end{bmatrix}\transpose\\
\boldsymbol{\varepsilon}_g &= \begin{bmatrix} \varepsilon_{1,g} & \ldots & \varepsilon_{n_g, g} \end{bmatrix}\transpose \qquad \boldsymbol{\varepsilon} = \begin{bmatrix} \boldsymbol{\varepsilon}_1 & \ldots & \boldsymbol{\varepsilon}_m \end{bmatrix}\transpose \qquad \boldsymbol{e} = \begin{bmatrix} e_1 & \ldots & e_m \end{bmatrix}\transpose\\
\bx_g^{n_g} &= \underbrace{\begin{bmatrix} \bx_g & \ldots & \bx_g \end{bmatrix}}_{\text{$n_g$ times}} \qquad \bx = \begin{bmatrix} \bx_1^{n_1} & \ldots & \bx_{m}^{n_m} \end{bmatrix} \qquad \overline{\bx} = \begin{bmatrix} \bx_1^1 & \ldots & \bx_m^1 \end{bmatrix},
\end{align*}

the $(1 \times n_g)$ vectors $\boldsymbol{1}_{n_g} = \begin{bmatrix} 1 & \ldots & 1 \end{bmatrix}$ and $\boldsymbol{0}_{n_g} = \begin{bmatrix} 0 & \ldots & 0 \end{bmatrix}$, and the matrix

\begin{align*}
    \boldsymbol{1} &= \begin{bmatrix}
       \boldsymbol{1}_{n_1} & \boldsymbol{0}_{n_2} & \ldots & \boldsymbol{0}_{n_m}\\
        \boldsymbol{0}_{n_1} & \boldsymbol{1}_{n_2} & \ldots & \boldsymbol{0}_{n_m}\\
            \vdots & \vdots & \ddots & \vdots\\
                \boldsymbol{0}_{n_1} & \boldsymbol{0}_{n_2} &\ldots & \boldsymbol{1}_{n_m}
    \end{bmatrix}.
\end{align*}

The OLS estimators are given by

\begin{align*}
\widehat{\bA}_{OLS} &= \underset{\ba}{\text{argmin}}\lbrace \left(\boldsymbol{Y} - \bx\transpose\ba\right)\transpose\left(\boldsymbol{Y} - \bx\transpose\ba\right)\rbrace\\
&= \left(\bx\bx\transpose\right)\inverse\bx \boldsymbol{Y}\\
\widehat{\bB}_{OLS} &= \underset{\bb}{\text{argmin}}\lbrace \left((\boldsymbol{\sqrt{\boldsymbol{1}\boldsymbol{1}\transpose}})\overline{\boldsymbol{Y}} - \boldsymbol{\sqrt{\boldsymbol{1}\boldsymbol{1}\transpose}}\overline{\bx}\transpose\bb\right)\transpose\left((\boldsymbol{\sqrt{\boldsymbol{1}\boldsymbol{1}\transpose}})\overline{\boldsymbol{Y}} - (\boldsymbol{\sqrt{\boldsymbol{1}\boldsymbol{1}\transpose}})\overline{\bx}\transpose\bb\right)\rbrace \\
&= \left(\overline{\bx}\boldsymbol{1}\boldsymbol{1}\transpose\overline{\bx}\transpose\right)\inverse\overline{\bx}\boldsymbol{1}\boldsymbol{1}\transpose\overline{\boldsymbol{Y}}.
\end{align*}

Model~(\ref{eq1}) on a micro level and model~(\ref{eq2}) on a macro level are equivalent, in the sense that
\begin{itemize}%[leftmargin=*,labelsep=4mm]
\item[(i)] $\mathbb{E}[\widehat{\bA}_{OLS}]=\mathbb{E}[\widehat{\bB}_{OLS}]$ and $\text{\em Var}[\widehat{\bA}_{OLS}]=\text{\em Var}[\widehat{\bB}_{OLS}]$, when weights $n_g$ are used in model~(\ref{eq2}); and
\item[(ii)] $\mathbb{E}\displaystyle{\left[\sum_{i,g} \widehat{Y}_{i,g}\right] = \mathbb{E}\left[\sum_{g} \widehat{Y}_{g}\right]}$ and $\text{\em Var}\displaystyle{\left[\sum_{i,g} \widehat{Y}_{i,g}\right] = \text{\em Var}\left[\sum_{g} \widehat{Y}_{g}\right]}$.
\end{itemize}
\end{Corollary}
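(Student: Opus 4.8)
The plan is to upgrade the Proposition from an equality of \emph{estimators} (as functions of the observed data) to an equality of the corresponding \emph{random variables}, after which both parts follow with essentially no further work. Concretely, I would first verify that the two design objects coincide. Since the rows of $\boldsymbol{1}$ have disjoint support, one has $\boldsymbol{1}\boldsymbol{1}\transpose = \text{diag}(n_1,\ldots,n_m)$, so that
\[
\overline{\bx}\boldsymbol{1}\boldsymbol{1}\transpose\overline{\bx}\transpose = \sum_{g} n_g \bx_g\bx_g\transpose = \bx\bx\transpose, \qquad \overline{\bx}\boldsymbol{1}\boldsymbol{1}\transpose\overline{\boldsymbol{Y}} = \sum_{g} n_g \bx_g \overline{Y}_g = \sum_{i,g}\bx_g Y_{i,g} = \bx\boldsymbol{Y}.
\]
Substituting these two identities into the closed forms given in the statement yields $\widehat{\bA}_{OLS} = \widehat{\bB}_{OLS}$ as an identity of random variables; equivalently, this is just the stochastic reading of Proposition~(i), both sides being the same measurable function of the underlying $Y_{i,g}$'s (recall that $\overline{Y}_g$ is a deterministic function of them).

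Given this, part~(i) is immediate: two random variables that are equal almost surely share the same law, hence in particular $\mathbb{E}[\widehat{\bA}_{OLS}] = \mathbb{E}[\widehat{\bB}_{OLS}]$ and $\text{Var}[\widehat{\bA}_{OLS}] = \text{Var}[\widehat{\bB}_{OLS}]$. The weighting by $n_g$ in the macro model is exactly what the factor $\boldsymbol{1}\boldsymbol{1}\transpose$ encodes, so the hypothesis of~(i) is invoked precisely where it is needed, namely in the reduction of the second identity above.

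For part~(ii) I would write the aggregated predictions in terms of the common estimator, using $\widehat{Y}_{i,g} = \bx_g\transpose\widehat{\bA}_{OLS}$ and $\widehat{Y}_g = n_g\,\bx_g\transpose\widehat{\bB}_{OLS}$. The stochastic mirror of Proposition~(ii) then reads
\[
\sum_{i,g}\widehat{Y}_{i,g} = \sum_{g} n_g \bx_g\transpose\widehat{\bA}_{OLS} = \sum_{g} n_g \bx_g\transpose\widehat{\bB}_{OLS} = \sum_{g} \widehat{Y}_g,
\]
where the middle equality is the conclusion $\widehat{\bA}_{OLS}=\widehat{\bB}_{OLS}$ of the first step. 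Since the two sums are equal as random variables, equality of their expectations and of their variances is automatic.

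The only genuine content, and hence the step I would treat most carefully, is the first one: recognizing that the two estimators are not merely equal in distribution but are literally the same random variable (equal pathwise). This is what makes the variance identities \emph{exact} rather than asymptotic, and it is where the matrix bookkeeping $\boldsymbol{1}\boldsymbol{1}\transpose=\text{diag}(n_1,\ldots,n_m)$ must be checked; everything downstream is then a one-line consequence of equality of random variables implying equality of all their moments.
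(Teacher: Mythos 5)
Your proof is correct, and it takes a genuinely different route from the paper's. You observe that $\boldsymbol{1}\boldsymbol{1}\transpose=\mathrm{diag}(n_1,\ldots,n_m)$, deduce the two identities $\overline{\bx}\boldsymbol{1}\boldsymbol{1}\transpose\overline{\bx}\transpose=\bx\bx\transpose$ and $\overline{\bx}\boldsymbol{1}\boldsymbol{1}\transpose\overline{\boldsymbol{Y}}=\bx\boldsymbol{Y}$, and conclude that $\widehat{\bA}_{OLS}$ and $\widehat{\bB}_{OLS}$ are \emph{the same measurable function} of $\boldsymbol{Y}$, so that equality of expectations, variances, and indeed of the entire distributions is automatic; the same pathwise argument disposes of (ii) since $\sum_{i,g}\widehat{Y}_{i,g}=\sum_g n_g\bx_g\transpose\widehat{\bA}_{OLS}=\sum_g\widehat{Y}_g$ identically. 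The paper instead verifies the two moments separately: it records the equivalent identities $\overline{\bx}\boldsymbol{1}=\bx$ and $(\boldsymbol{1}\boldsymbol{1}\transpose)\inverse\boldsymbol{1}\boldsymbol{Y}=\overline{\boldsymbol{Y}}$, then computes $\Esp{\widehat{\bB}_{OLS}}$ by substituting $\Esp{\overline{\boldsymbol{Y}}}=(\boldsymbol{1}\boldsymbol{1}\transpose)\inverse\boldsymbol{1}\Esp{\boldsymbol{Y}}$ and simplifying to $(\bx\bx\transpose)\inverse\bx\Esp{\boldsymbol{Y}}$, and separately computes $\Var{\widehat{\bB}_{OLS}}$ via the sandwich formula, matching it to $(\bx\bx\transpose)\inverse\bx\Var{\boldsymbol{Y}}\bx\transpose\left((\bx\bx\transpose)\inverse\right)\transpose$; part (ii) is handled by a parallel chain of expectations. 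The underlying matrix algebra is identical in the two arguments, but your version is both shorter and strictly stronger: the paper stops at the first two moments, whereas almost-sure (in fact deterministic) equality of the estimators yields equality of all moments at once, and it makes transparent why the result is exact rather than asymptotic and why it holds distribution-free (no use of (LRM2)--(LRM3) beyond existence of the moments being compared). One could say your observation shows the Corollary is an immediate restatement of the Proposition read at the level of random variables, which is precisely the point your last paragraph makes; the only item worth flagging is that you rely on the closed form of $\widehat{\bB}_{OLS}$ given in the statement, which presumes the weighted normal equations have a unique solution, i.e.\ the invertibility of $\overline{\bx}\boldsymbol{1}\boldsymbol{1}\transpose\overline{\bx}\transpose=\bx\bx\transpose$ guaranteed by (LRM1) --- the same implicit assumption the paper makes.
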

\begin{proof}
Straightforward calculations lead to $(\boldsymbol{1}\boldsymbol{1}\transpose)\inverse\boldsymbol{1}\boldsymbol{Y} = \overline{\boldsymbol{Y}}$ and $\overline{\bx}\boldsymbol{1} = \bx$.
\begin{itemize}%[leftmargin=*,labelsep=4mm]
\item[(i)] Let

\begin{align*}
\Esp{\widehat{\bB}_{OLS}} &= \left(\overline{\bx} \boldsymbol{\boldsymbol{1}\boldsymbol{1}\transpose} \overline{\bx}\transpose\right)\inverse \overline{\bx}\boldsymbol{\boldsymbol{1}\boldsymbol{1}\transpose}\Esp{\overline{\boldsymbol{Y}}}\\
&= \left(\overline{\bx} \boldsymbol{\boldsymbol{1}\boldsymbol{1}\transpose} \overline{\bx}\transpose\right)\inverse \overline{\bx}\boldsymbol{\boldsymbol{1}\boldsymbol{1}\transpose}(\boldsymbol{1}\boldsymbol{1}\transpose)\inverse\boldsymbol{1}\Esp{\boldsymbol{Y}}\\
&= \left(\bx \bx\transpose\right)\inverse \bx\Esp{\boldsymbol{Y}} = \Esp{\widehat{\bA}_{OLS}}.
\end{align*}

For the equality of variances, we have

\begin{align*}
    &\Var{\widehat{\bB}_{OLS}}\\ &=\left(\overline{\bx}\boldsymbol{1}\boldsymbol{1}\transpose\overline{\bx}\transpose\right)\inverse\overline{\bx}\boldsymbol{1}\boldsymbol{1}\transpose\Var{\overline{\boldsymbol{Y}}}\left(\left(\overline{\bx}\boldsymbol{1}\boldsymbol{1}\transpose\overline{\bx}\transpose\right)\inverse\overline{\bx}\boldsymbol{1}\boldsymbol{1}\transpose\right)\transpose\\
    &= \left(\overline{\bx}\boldsymbol{1}\boldsymbol{1}\transpose\overline{\bx}\transpose\right)\inverse\overline{\bx}\boldsymbol{1}\boldsymbol{1}\transpose(\boldsymbol{1}\boldsymbol{1}\transpose)\inverse\boldsymbol{1}\Var{\boldsymbol{Y}}\boldsymbol{1}\transpose \left((\boldsymbol{1}\boldsymbol{1}\transpose)\inverse\right)\transpose\left(\left(\overline{\bx}\boldsymbol{1}\boldsymbol{1}\transpose\overline{\bx}\transpose\right)\inverse\overline{\bx}\boldsymbol{1}\boldsymbol{1}\transpose\right)\transpose\\
    &= (\bx\bx\transpose)\inverse\bx \Var{\boldsymbol{Y}}\bx\transpose\left((\bx\bx\transpose)\inverse\right)\transpose\\
    &= \Var{\widehat{\bA}_{OLS}}.
\end{align*}

\item[(ii)] Let

\begin{align*}
\Esp{\sum_{g} \widehat{Y}_{g}} &= \Esp{\boldsymbol{1}_m\boldsymbol{1}\boldsymbol{1}\transpose \widehat{\overline{\boldsymbol{Y}}}} = \Esp{\boldsymbol{1}_m\boldsymbol{1}\boldsymbol{1}\transpose \overline{\bx}\transpose\widehat{\bB}}\\
&= \Esp{\boldsymbol{1}_m\boldsymbol{1}\boldsymbol{1}\transpose \overline{\bx}\transpose\widehat{\bA}} = \Esp{\boldsymbol{1}_n \bx\transpose \widehat{\bA}}\\
&= \Esp{\boldsymbol{1}_n\widehat{\boldsymbol{Y}}} = \Esp{\sum_{i,g} Y_{i,g}}.
\end{align*}

The proof of the equality of variances is similar.
\end{itemize}
\end{proof}

\subsection{The quasi-Poisson regression}\label{ss:quasi}

A similar result can be obtained in the context of Poisson regressions. 
A generalized linear model \cite{McCullaghNelder} is made up of a linear predictor $\bx\transpose\bb$, a link function that describes how the expected value depends on this linear predictor and a variance function that describes how the variance, depends on the expected value $\Var{Y} = \varphi V\left(\Esp{Y}\right)$, where $\varphi$ denotes the dispersion parameter. For the Poisson model, the variance is equal to the mean, i.e., $\varphi = 1$ and $V(\Esp{Y}) = \Esp{Y}$. This may be too restrictive for many actuarial illustrations, which often show more variation than given by expected values. We use the term \emph{over-dispersed} for a model where the variance exceeds the expected value. A common way to deal with over-dispersion is a \emph{quasi-likelihood} approach (see \cite{McCullaghNelder} for further discussion) where a model is characterized by its first two moments.

Consider either a Poisson regression model, or a quasi-Poisson one,

\begin{equation}\label{eq3}
Y_{i,g} \sim \mathcal{P}(\lambda_{i,g})\text{ or }Y_{i,g} \sim q\mathcal{P}(\lambda_{i,g}).
\end{equation}

In the case of a Poisson regression, 

\begin{equation}
\Esp{Y_{i,g}} = \lambda_{i,g}=\exp[\bx_{g}\transpose\ba + \ln(1/n_g)]
\text{ and }\Var{Y_{i,g}} = \lambda_{i,g}, 
\end{equation}

and in the context of a quasi-Poisson regression,

\begin{equation}
\Esp{Y_{i,g}}= \lambda_{i,g}=\exp[\bx_{g}\transpose\ba + \ln(1/n_g)]
\text{ and }\Var{Y_{i,g}} = \varphi_{\text{micro}}\lambda_{i,g}, 
\end{equation}

with $\varphi_{\text{micro}}>0$ for a quasi-Poisson regression ($\varphi_{\text{micro}}>1$ for overdispersion). 
Here again, stacking observations within a cluster yield the following
model (on the sum and not the average value, to have a valid
interpretation with a Poisson distribution)

\begin{equation}\label{eq4}
Y_{g}= \sum_i Y_{i,g} \sim \mathcal{P}(\lambda_{g})\text{ or }Y_{g} \sim q\mathcal{P}(\lambda_{g}).
\end{equation}

In the context of a Poisson regression,  

\begin{equation}
\Esp{Y_g} = \lambda_{g} =\exp[\bx_{g}\transpose\bb]\text{ and }\Var{Y_g}= \lambda_g\nonumber,
\end{equation}

and in the context of a quasi-Poisson regression,

\begin{equation}
\Esp{Y_{g}} = \lambda_{g}=\exp[\bx_{g}\transpose\bb] \text{ and
}\Var{Y_{g}} = \varphi_{\text{macro}}\lambda_{g}, 
\end{equation}

with $\varphi_{\text{macro}}>0$ for a quasi-Poisson regression. 
Here again, those two models ("micro" and "macro") are equivalent, in the sense that the following proposition holds.

\begin{Proposition}\label{propPoisson}
Model~(\ref{eq3}) on a micro level and model~(\ref{eq4}) on a macro level are equivalent in the sense that
\begin{itemize}%[leftmargin=*,labelsep=4mm]
\item[(i)] $\widehat{\ba}_{MLE}=\widehat{\bb}_{MLE}$; and

\item[(ii)] $\displaystyle{\sum_{i,g} \widehat{y}_{i,g} = \sum_{g} \widehat{y}_{g}}$.
\end{itemize}
\end{Proposition}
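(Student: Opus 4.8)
The plan is to prove both claims by showing that the micro- and macro-level (quasi-)likelihoods coincide as functions of the parameter, up to an additive constant that does not involve $\ba$ (resp.\ $\bb$). The decisive structural feature is the offset $\ln(1/n_g)$ in the micro model, together with the fact that within a cluster $g$ every observation shares the covariate $\bx_g$, and hence the same rate $\lambda_{i,g}=\exp[\bx_g\transpose\ba+\ln(1/n_g)]=n_g^{-1}\exp[\bx_g\transpose\ba]$.

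First I would write the Poisson micro log-likelihood $\ell_{\text{micro}}(\ba)=\sum_{i,g}\bigl(-\lambda_{i,g}+y_{i,g}\ln\lambda_{i,g}\bigr)+\text{const}$, and collapse the inner sums over $i$ using the within-cluster constancy of $\lambda_{i,g}$. The first contribution gives $\sum_i\lambda_{i,g}=n_g\cdot n_g^{-1}\exp[\bx_g\transpose\ba]=\exp[\bx_g\transpose\ba]$, precisely because the offset cancels $n_g$; the second gives $\ln\lambda_{i,g}\sum_i y_{i,g}=(\bx_g\transpose\ba-\ln n_g)\,y_g$, where $y_g=\sum_i y_{i,g}$. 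Collecting terms yields
$$\ell_{\text{micro}}(\ba)=\sum_g\bigl(-\exp[\bx_g\transpose\ba]+y_g\,\bx_g\transpose\ba\bigr)-\sum_g y_g\ln n_g+\text{const}.$$
Next I would write the macro log-likelihood of $Y_g\sim\mathcal{P}(\lambda_g)$ with $\lambda_g=\exp[\bx_g\transpose\bb]$, namely $\ell_{\text{macro}}(\bb)=\sum_g\bigl(-\exp[\bx_g\transpose\bb]+y_g\,\bx_g\transpose\bb\bigr)+\text{const}$, and observe that it is identical to $\ell_{\text{micro}}$ after renaming $\ba\leftrightarrow\bb$, since the extra term $-\sum_g y_g\ln n_g$ is free of the parameter. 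Maximizing the two objectives then produces the same maximizer, i.e.\ $\widehat{\ba}_{MLE}=\widehat{\bb}_{MLE}$, which is claim (i). Claim (ii) follows immediately: the micro fitted values satisfy $\sum_i\widehat{y}_{i,g}=\sum_i\widehat{\lambda}_{i,g}=\exp[\bx_g\transpose\widehat{\ba}]$, the macro fitted value is $\widehat{y}_g=\exp[\bx_g\transpose\widehat{\bb}]$, and these agree cluster by cluster by (i); summing over $g$ gives the result, and as in the linear case the identity already holds for each partial (cluster) sum.

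For the quasi-Poisson case I would argue that nothing changes for the point estimate. The quasi-score equations solve $\sum(y-\mu)V(\mu)^{-1}\partial_{\bb}\mu=0$, in which the common dispersion $\varphi$ factors out and is therefore irrelevant to the root; the estimating equations are exactly those of the Poisson MLE above, so $\widehat{\ba}=\widehat{\bb}$ and the fitted values coincide as before.

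I expect the only point requiring care — rather than a genuine obstacle — to be the bookkeeping of the offset: one must keep $\ln(1/n_g)$ inside the micro rate so that the within-cluster aggregation reproduces exactly $\exp[\bx_g\transpose\ba]$, matching the macro rate. It is also worth remarking, for the variance specifications to be mutually consistent, that independence forces $\Var{Y_g}=\sum_i\varphi_{\text{micro}}\lambda_{i,g}=\varphi_{\text{micro}}\lambda_g$, so that $\varphi_{\text{macro}}=\varphi_{\text{micro}}$; this does not affect $\widehat{\ba}=\widehat{\bb}$ but clarifies why the two dispersion parameters must match.
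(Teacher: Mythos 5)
Your proof is correct and takes essentially the same route as the paper: the paper equates the micro and macro score equations after aggregating within clusters (the offset $\ln(1/n_g)$ cancelling $n_g$, and the dispersion factoring out of the quasi-score exactly as you note), which is precisely the derivative of your identity $\ell_{\text{micro}} = \ell_{\text{macro}} + \text{const}$, and its part (ii) is the same cluster-by-cluster computation of fitted values. If anything, your likelihood-level formulation is marginally cleaner, since equality of the objective functions gives equality of maximizers without appealing to uniqueness of the root of the score equations, and it keeps the offset on the micro side consistently with the model statement, where the paper's proof instead shifts an offset $\log(n_g)$ onto the macro intensity.
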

\begin{proof}
  \begin{itemize}%[leftmargin=*,labelsep=4mm]
  \item[(i)]
    Maximum likelihood estimator of $\ba$ is the solution of

    \begin{align*}
      \sum_{i,g} \left(\frac{y_{i,g} - \exp[\bx_{g}\transpose\ba]}{\varphi_{\text{micro}}}\right)\bx_g &= \bzero
    \end{align*}

    or equivalently

    \begin{align*}
      \sum_{i,g} \left(y_{i,g} - \exp[\bx_{g}\transpose\ba]\right)\bx_g &= \bzero
    \end{align*}

    With offsets $\lambda_{g}^*=\exp[\bx_{g}\transpose\bb +
    \log(n_g)]$, $g = 1, \ldots, m$, maximum likelihood estimator of
    $\bb$ is the solution (as previously, we can remove
    $\varphi_{\text{macro}}$) of

    \begin{align*}
      \sum_{g}\left(y_g - n_g\exp[\bx_{g}\transpose\bb]\right)\bx_g &= 0\\
      \sum_{i,g}\left(y_{i,g} - \exp[\bx_{g}\transpose\bb]\right)\bx_g &= 0.
    \end{align*}

    Hence, $\widehat{\ba} = \widehat{\bb}$, as (unique) solutions of
    the same system of equations.
  \item[(ii)] 
    The sum of predicted values is

\begin{align*}
      \sum_{i,g} \widehat{y}_{i,g} &= \sum_g
      n_g\widehat{\lambda}_{i,g} 
= \sum_g n_g\exp[\bx_{g}\transpose\widehat{\ba}] = \sum_g n_g\exp[\bx_{g}\transpose\widehat{\bb}]\\
      &= \sum_g \exp[\bx_{g}\transpose\widehat{\bb} + \log(n_g)] = \sum_g \widehat{\lambda}_g^* = \sum_g \widehat{y}_g.
    \end{align*}

\textcolor{white}{---}
  \end{itemize}
\end{proof}
Nevertheless, as we will see later on, the Corollary obtained in the context of a Gaussian linear model does not hold in the context of a quasi-Poisson regression.
\begin{Corollary}\label{cor:1}
Model~(\ref{eq3}) on a micro level and model~(\ref{eq4}) on a macro level are asymptotically equivalent for Poisson regressions, in the sense that
\begin{itemize}%[leftmargin=*,labelsep=4mm]
\item[(i)] $\mathbb{E}[\widehat{\bA}_{MLE}]=\mathbb{E}[\widehat{\bB}_{MLE}]$ and $\text{\em Var}[\widehat{\bA}_{MLE}]=\text{\em Var}[\widehat{\bB}_{MLE}]$, when $n$ goes to infinity; and
\item[(ii)] $\mathbb{E}\displaystyle{\left[\sum_{i,g} \widehat{Y}_{i,g}\right] = \mathbb{E}\left[\sum_{g} \widehat{Y}_{g}\right]}$ and $\text{\em Var}\displaystyle{\left[\sum_{i,g} \widehat{Y}_{i,g}\right] = \text{\em Var}\left[\sum_{g} \widehat{Y}_{g}\right]}$, when $n$ goes to infinity.
\end{itemize}
\end{Corollary}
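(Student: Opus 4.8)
The plan is to reduce everything to the \emph{pathwise} identity already established in Proposition~\ref{propPoisson}, and to use standard maximum-likelihood asymptotics only to make the moment statements precise. The key observation is that the micro score equation $\sum_{i,g}(y_{i,g}-\exp[\bx_g\transpose\ba])\bx_g=\bzero$ depends on the data exclusively through the cluster totals $y_g=\sum_i y_{i,g}$, and that for the Poisson family these totals are a sufficient statistic whose law is \emph{exactly} the macro model: if $Y_{i,g}\sim\mathcal{P}(\lambda_{i,g})$ independently with $\lambda_{i,g}=\exp[\bx_g\transpose\ba-\log n_g]$, then $Y_g=\sum_i Y_{i,g}\sim\mathcal{P}(\lambda_g)$ with $\lambda_g=n_g\lambda_{i,g}=\exp[\bx_g\transpose\ba]$. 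Hence, reading Proposition~\ref{propPoisson} at the level of random variables rather than realisations, $\widehat{\bA}_{MLE}$ and $\widehat{\bB}_{MLE}$ are one and the same function of the common random vector $(Y_1,\dots,Y_m)$, so they coincide pathwise. Everything below is a consequence of this identity; the qualifier ``as $n\to\infty$'' is needed only because the finite-sample moments of a Poisson GLM estimator may fail to be finite (e.g.\ on the event that all totals vanish, the log-link estimator escapes to $-\infty$), so the moments are read off from the limiting normal law.

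For part~(i), I would first note that the true parameters agree, $\ba=\bb$, since the two mean specifications force $\exp[\bx_g\transpose\ba]=n_g\lambda_{i,g}=\lambda_g=\exp[\bx_g\transpose\bb]$ for every $g$. The pathwise equality $\widehat{\bA}_{MLE}=\widehat{\bB}_{MLE}$ then gives $\Esp{\widehat{\bA}_{MLE}}=\Esp{\widehat{\bB}_{MLE}}$ and $\Var{\widehat{\bA}_{MLE}}=\Var{\widehat{\bB}_{MLE}}$ whenever these exist. To phrase this within the asymptotic framework, I would invoke consistency and asymptotic normality of the two estimators and compare their Fisher informations: with the log-link the information matrices are
\begin{align*}
I_{\text{micro}}(\ba)=\sum_{i,g}\lambda_{i,g}\,\bx_g\bx_g\transpose=\sum_g\Big(\sum_i\lambda_{i,g}\Big)\bx_g\bx_g\transpose=\sum_g\lambda_g\,\bx_g\bx_g\transpose=I_{\text{macro}}(\bb),
\end{align*}
so the asymptotic means ($\ba=\bb$) and asymptotic covariances ($I\inverse$) coincide, yielding the claim.

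For part~(ii), the sum of predictions is, by Proposition~\ref{propPoisson}(ii) applied pathwise, the single random variable $\sum_g\exp[\bx_g\transpose\widehat{\bA}_{MLE}]=\sum_g\exp[\bx_g\transpose\widehat{\bB}_{MLE}]$; consequently $\sum_{i,g}\widehat{Y}_{i,g}$ and $\sum_g\widehat{Y}_g$ have identical distributions, hence identical mean and variance. To obtain an explicit expansion I would apply the delta method to $S=g(\widehat{\bA}_{MLE})$ with $g(\ba)=\sum_g\exp[\bx_g\transpose\ba]$, giving $\Var{S}\approx\nabla g(\ba)\transpose I(\ba)\inverse\nabla g(\ba)$; because the map $g$, the true parameter, and the information $I$ are common to both levels, the two variances share the same first-order value.

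The main obstacle, and the point that distinguishes this from the Gaussian Corollary, is that the MLE is a \emph{nonlinear} functional of the data with no closed form, so one cannot simply push the moments through a linear operator as in the OLS case. What rescues the argument is the closure of the Poisson family under convolution together with sufficiency of the cluster totals, which reproduces the macro law exactly and therefore preserves the equivalence through the nonlinearity --- but \emph{only} when $\varphi=1$. I would flag explicitly that this is where the quasi-Poisson case departs: aggregating independent variables with $\Var{Y_{i,g}}=\varphi_{\text{micro}}\lambda_{i,g}$ does not reproduce a quasi-Poisson macro law with a matching dispersion, and the dispersion estimates $\widehat{\varphi}_{\text{micro}}$ and $\widehat{\varphi}_{\text{macro}}$ rest on different residual degrees of freedom ($n-(k+1)$ versus $m-(k+1)$); this breaks the variance equivalence and motivates restricting the Corollary to genuine Poisson regressions.
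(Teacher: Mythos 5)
Your proof is correct, but it takes a genuinely different route from the paper's. The paper argues entirely at the level of asymptotic approximations: for (i) it invokes consistency of the MLE ($\Esp{\widehat{\bA}_{MLE}}\to\ba$, $\Esp{\widehat{\bB}_{MLE}}\to\bb$, with $\ba=\bb$ in the absence of micro covariates) and then checks equality of the two Fisher information matrices, $\boldsymbol{I}(\bA)=\bx\boldsymbol{W}\bx\transpose=\overline{\bx}\boldsymbol{1}\boldsymbol{W}\boldsymbol{1}\transpose\overline{\bx}\transpose=\boldsymbol{I}(\bB)$ with $\boldsymbol{W}=\text{diag}((\lambda_1/n_1)\boldsymbol{1}_{n_1},\ldots,(\lambda_m/n_m)\boldsymbol{1}_{n_m})$ --- precisely the identity you reproduce in summation form $\sum_{i,g}\lambda_{i,g}\bx_g\bx_g\transpose=\sum_g\lambda_g\bx_g\bx_g\transpose$ --- and for (ii) it pushes the same identification through the predicted totals. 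You instead exploit sufficiency plus Poisson closure under convolution: the micro score depends on the data only through the cluster totals $Y_g$, whose exact law is the macro model, so $\widehat{\bA}_{MLE}$ and $\widehat{\bB}_{MLE}$ coincide pathwise as the same measurable function of $(Y_1,\ldots,Y_m)$, and $\sum_{i,g}\widehat{Y}_{i,g}=\sum_g\widehat{Y}_g$ identically by Proposition~\ref{propPoisson}(ii). This buys something stronger than the paper proves: equality in distribution at every finite sample size, with the qualifier ``$n\to\infty$'' needed only to make the moments well defined (the log-link MLE can be undefined or unbounded with positive probability, e.g.\ when cluster totals vanish), whereas the paper's argument only delivers equality of the limits of means and covariances. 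Your route also makes transparent \emph{why} the result is Poisson-specific --- sufficiency of the totals with a matching aggregate law fails when $\varphi\neq 1$, which cleanly anticipates Corollary~\ref{cor:2} --- where the paper simply recomputes the two dispersion estimators. Two caveats you should state explicitly: the pathwise identification presupposes that the macro data are the aggregates of the very same micro sample (the paper's setting) and that the common score equation has a unique root (full-rank design), which the paper assumes implicitly in Proposition~\ref{propPoisson} via ``(unique) solutions.''
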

\begin{proof}
  \begin{itemize}%[leftmargin=*,labelsep=4mm]
  \item[(i)] A classical result of asymptotic theory for maximum likelihood estimators indicates that, under mild regularity conditions, $\Esp{\widehat{\bA}_{MLE}} \to \ba$ and $\Esp{\widehat{\bB}_{MLE}} \to \bb$ as $n \to \infty$. Without "micro" covariates in the model, $\ba = \bb$ and so, $\Esp{\widehat{\bB}_{MLE}} = \Esp{\widehat{\bA}_{MLE}}$ when $n \to \infty$. For model (\ref{eq3}), the Fisher information matrix is $\boldsymbol{I}(\bA) = \bx \boldsymbol{W} \bx\transpose$ and, when $n \to \infty$, $\Var{\widehat{\bA}} \to \left(\bx \boldsymbol{W} \bx\transpose\right)\inverse$, where $\boldsymbol{W} = \text{diag}((\lambda_1/n_1) \boldsymbol{1}_{n_1},\ldots,(\lambda_m/n_m)\boldsymbol{1}_{n_m})$. For model (\ref{eq4}), we have $\boldsymbol{I}(\bB) = \overline{\bx} \boldsymbol{1}\boldsymbol{W}\boldsymbol{1}\transpose\overline{\bx}\transpose =
    \bx \boldsymbol{W} \bx\transpose$ and, when $n \to \infty$, $\Var{\widehat{\bB}} \to \left(\bx \boldsymbol{W} \bx\transpose\right)\inverse$.
  \item[(ii)] By using a similar argument, we have when $n$ goes to
    infinity

    \begin{align*}
      \Esp{\sum_g \widehat{Y}_g} &= \Esp{\boldsymbol{1}_m\boldsymbol{1}\boldsymbol{1}\transpose\widehat{\overline{\boldsymbol{Y}}}} = \boldsymbol{1}_m\boldsymbol{1}\boldsymbol{1}\transpose M_{\widehat{\bB}}\left(\overline{\bx}\transpose\right)\\
     &= \boldsymbol{1}_m\boldsymbol{1}\boldsymbol{1}\transpose M_{\widehat{\bA}}\left(\overline{\bx}\transpose\right) = \Esp{\boldsymbol{1}_n\boldsymbol{1}\transpose e^{\overline{\bx}\transpose\widehat{\bA}}} = \Esp{\boldsymbol{1}_ne^{\bx\transpose\widehat{\bA}}}\\
      &= \Esp{\boldsymbol{1}_n\widehat{\boldsymbol{Y}}} = \Esp{\sum_{i,g} \widehat{Y}_{g,i}}.
    \end{align*}

\textcolor{white}{---}
  \end{itemize}
\end{proof}
In small or moderate-sized samples, it should be noted that $\widehat{\bA}$ and $\widehat{\bB}$ may be biased for $\bA$ and $\bB$, respectively. Generally, this bias is negligible compared with the standard errors (see \cite{CMC} and \cite{MW}).

In the quasi-Poisson micro-level model (from model~\eqref{eq3}), as
discussed above, the estimator of $\ba$ is the solution of the
quasi-score function

\begin{align*}  
    \sum_{i,g} \left(\frac{y_{i,g} - \lambda_{i,g}}{\varphi_{micro}}\right)\bx_{g} &= 0,
\end{align*}

which implies $\widehat{\ba}_{QLE}=\widehat{\ba}_{MLE}$. The classical
Pearson estimator for the dispersion parameter
$\varphi_{\text{micro}}$ is
 
\begin{align*}
    \widehat{\varphi}_{micro} &= \sum_{i,g}\frac{\left(y_{i,g} - \widehat{y}_{i,g}\right)^2/\widehat{y}_{i,g}}{\sum_gn_g - (k+1)} 
    = \sum_{i,g} \frac{\left(n_gy_{i,g} - \widehat{y}_{g}\right)^2 / n_g\widehat{y}_{g}}{\sum_g n_g - (k+1)}.
\end{align*}

Empirical evidence (see \cite{RUG}) support the use of the Pearson
estimator for estimating $\varphi$ because it is the most robust
against the distributional assumption. In a similar way, the
quasi-Poisson macro-level model (from model~\eqref{eq4}), the
estimator of $\bb$ is the solution of 

\begin{align*}  
    \sum_{g} \left(\frac{y_{g} - \lambda_{g}}{\varphi_{macro}}\right)\bx_{g} &= 0,
\end{align*}

which implies here also $\widehat{\bb}_{QLE}=\widehat{\bb}_{MLE}$. The dispersion parameter $\varphi$ is estimated by

\begin{align*}
 \widehat{\varphi}_{macro} &= \sum_g \frac{\left(y_g - \widehat{y}_g\right)^2/\widehat{y}_g}{m - (k+1)}.
\end{align*}

Clearly, $\widehat{\varphi}_{micro} \ne \widehat{\varphi}_{macro}$ involving the following results.
\begin{Corollary}\label{cor:2}
Model~(\ref{eq3}) on a micro level and model~(\ref{eq4}) on a macro level are not asymptotically equivalent for quasi-Poisson regressions, in the sense that
\begin{itemize}%[leftmargin=*,labelsep=4mm]
\item[(i)] $\mathbb{E}[\widehat{\bA}_{QLE}]=\mathbb{E}[\widehat{\bB}_{QLE}]$ but $\text{\em Var}[\widehat{\bA}_{QLE}] \neq \text{\em Var}[\widehat{\bB}_{QLE}]$, when $n$ goes to infinity; and
\item[(ii)] $\mathbb{E}\displaystyle{\left[\sum_{i,g} \widehat{Y}_{i,g}\right] = \mathbb{E}\left[\sum_{g} \widehat{Y}_{g}\right]}$ but $\text{\em Var}\displaystyle{\left[\sum_{i,g} \widehat{Y}_{i,g}\right] \neq \text{\em Var}\left[\sum_{g} \widehat{Y}_{g}\right]}$, when $n$ goes to infinity.
\end{itemize}
\end{Corollary}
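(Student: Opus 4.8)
The plan is to split each assertion into its expectation half, which I expect to transfer unchanged from the Poisson case, and its variance half, which is where the dispersion parameter destroys the equivalence. For the expectations, I would first record what the text has already established: because $\varphi$ factors out of each quasi-score equation, $\widehat{\ba}_{QLE}=\widehat{\ba}_{MLE}$ and $\widehat{\bb}_{QLE}=\widehat{\bb}_{MLE}$. Hence $\Esp{\widehat{\bA}_{QLE}}=\Esp{\widehat{\bB}_{QLE}}$ follows verbatim from the asymptotic-unbiasedness argument of Corollary~\ref{cor:1}(i), and $\Esp{\sum_{i,g}\widehat{Y}_{i,g}}=\Esp{\sum_g\widehat{Y}_g}$ follows from Proposition~\ref{propPoisson}(ii) combined with the mean computation in Corollary~\ref{cor:1}(ii). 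Neither step uses the numerical value of $\varphi$, so both ``$\mathbb{E}$'' equalities in (i) and (ii) hold as stated.

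For the variance in (i), I would invoke the standard quasi-likelihood sandwich formula: the asymptotic covariance of a quasi-MLE equals the dispersion parameter times the inverse of the quasi-information matrix. Since the information matrices coincide — this is exactly the identity $\boldsymbol{I}(\bA)=\bx\boldsymbol{W}\bx\transpose=\boldsymbol{I}(\bB)$ from the proof of Corollary~\ref{cor:1}(i) — the two asymptotic covariances are
\begin{equation*}
\Var{\widehat{\bA}_{QLE}}\to\varphi_{\text{micro}}\left(\bx\boldsymbol{W}\bx\transpose\right)\inverse,\qquad\Var{\widehat{\bB}_{QLE}}\to\varphi_{\text{macro}}\left(\bx\boldsymbol{W}\bx\transpose\right)\inverse.
\end{equation*}
They are therefore proportional with ratio $\varphi_{\text{micro}}/\varphi_{\text{macro}}$, and agree if and only if that ratio is one. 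This reduces the whole variance inequality to the single scalar inequality $\varphi_{\text{micro}}\neq\varphi_{\text{macro}}$.

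The main obstacle is precisely to justify $\varphi_{\text{micro}}\neq\varphi_{\text{macro}}$. Here I would work with the two Pearson statistics displayed just above the statement: their numerators aggregate $\sum_g n_g$ disaggregated residuals $(y_{i,g}-\widehat{y}_{i,g})^2/\widehat{y}_{i,g}$ versus only $m$ cluster residuals $(y_g-\widehat{y}_g)^2/\widehat{y}_g$, and their denominators are $\sum_g n_g-(k+1)$ versus $m-(k+1)$. I would identify the probability limits of the two statistics and argue they measure genuinely different features of the data — the micro dispersion captures within-cluster scatter of the individual observations, the macro dispersion captures only scatter among the cluster totals — so they coincide only in the degenerate configuration where aggregation preserves the Pearson chi-square exactly (independent, exactly Poisson counts). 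Exhibiting a single overdispersed configuration in which they differ, or simply noting the empirical inequality flagged in the text, suffices.

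Finally, for the variance in (ii) I would be careful about a subtlety: by Proposition~\ref{propPoisson}(ii) the point predictor $\sum\widehat{Y}$ is the \emph{same} random variable under either model, so the inequality must live in the dispersion-scaled variance attributed to it, not in the predictor itself. I would use the delta method to write the prediction-error variance of the total as an estimation component, governed by $\Var{\widehat{\bA}}$ and hence carrying the factor from (i), plus a process component that aggregates the cell variances $\varphi_{\text{micro}}\lambda_{i,g}$ at the micro level against $\varphi_{\text{macro}}\lambda_g$ at the macro level. Since the $\lambda$-totals match while the dispersions do not, both components are scaled by a factor that differs between the two models, so the two total variances cannot coincide, which closes the argument.
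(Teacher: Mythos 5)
Your proposal is correct and follows essentially the same route as the paper: the expectation equalities via $\widehat{\ba}_{QLE}=\widehat{\ba}_{MLE}$, $\widehat{\bb}_{QLE}=\widehat{\bb}_{MLE}$ and Corollary~\ref{cor:1}, and the variance inequalities via the common information matrix $\bx\boldsymbol{W}\bx\transpose$ scaled by the two Pearson dispersion estimators, exactly as in Equation~\eqref{eq:W}, with the key inequality $\widehat{\varphi}_{\text{micro}}\neq\widehat{\varphi}_{\text{macro}}$ asserted (the paper says only ``Clearly'') from the differing Pearson statistics. Your extra care in (ii) --- observing that the point predictors coincide as random variables, so the inequality must live in the model-attributed, dispersion-scaled variance obtained by the delta method --- simply spells out what the paper compresses into ``the proof is similar to \ref{cor:1}(ii)'' and later carries out explicitly in the proof of Proposition~\ref{prop:MSEP}.
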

\begin{proof}
\begin{itemize}%[leftmargin=*,labelsep=4mm]
\item[(i)] The property that variances are not equal is a direct consequence of classical results from the theory of generalized linear models (see \cite{McCullaghNelder}), since the covariance matrices of estimators are given by

\begin{align}\label{eq:W}
    \Var{\widehat{\bB}} &\to \widehat{\varphi}_{macro}\left(\bx \boldsymbol{W} \bx\transpose\right)^{-1} \nonumber\intertext{and}
        \Var{\widehat{\bA}} &\to \widehat{\varphi}_{micro}\left(\bx \boldsymbol{W} \bx\transpose\right)^{-1},
\end{align}

when $n$ goes to infinity. Thus, covariance matrices of estimators are asymptotically equal for the Poisson regression model but differ for the quasi-Poisson model because $\widehat{\varphi}_{micro} \ne \widehat{\varphi}_{macro}$. 
\item[(ii)] Since the MLE and the QLE share the same asymptotic distribution (see \cite{McCullaghNelder}), the proof is similar to \ref{cor:1}(ii).
\end{itemize}
\end{proof}

\subsection{Poisson regression with random effect}\label{ss:RE}

In the micro-level model described by Equation~\eqref{eq3},
observations made for the same event (subject) at different periods
are supposed to be independent. Within-subject correlation can be
included in the model by adding random, or subject-specific, effects
in the linear predictor. In the Poisson regression model with random
intercept, the between-subject variation is modeled by a random
intercept $\bgamma$ which represents the combined effects of all omitted covariates.

Let $Y_{g}^{(t)}$ represent the sum of all observations from subject $t$, in the cluster $g$ and

\begin{align*}
    \Esp{Y_{g}^{(t)}|\gamma_t} &= \exp[\bx_{g}\transpose\ba + \gamma_t]\\
    \Var{Y_{g}^{(t)}|\gamma_t} &= \Esp{Y_{g}^{(t)}|\gamma_t}\\
       \bgamma = \begin{bmatrix}
        \gamma_1 \\
        \ldots\\
        \gamma_T
       \end{bmatrix} &\sim \boldsymbol{N}_T\left(\boldsymbol{0}, \sigma^2\boldsymbol{I}\right),
\end{align*}

where $\boldsymbol{I}$ is the $(T \times T)$ identity matrix, and $\boldsymbol{N}_T(\boldsymbol{\mu},\boldsymbol{\Sigma})$ the $T$-dimensional Gaussian distribution with mean $\boldsymbol{\mu}$ and covariance matrix $\boldsymbol{\Sigma}$. Straightforward calculations lead to

\begin{align*}
\Esp{Y_{g}^{(t)}} &= \exp[\bx_{g}\transpose\ba + \sigma^2/2]\\
\Var{Y_{g}^{(t)}} &= \Esp{Y_{g}^{(t)}}\left(1 + \Esp{Y_{g}^{(t)}}\left(\exp[\sigma^2] - 1\right)\right).
\end{align*}

Hence,

\begin{equation}
\Var{Y_{g}^{(t)}} > \Esp{Y_{g}^{(t)}}, \quad \sigma^2 > 0.
\end{equation}

This last equation shows that the Poisson regression model with random intercept leads to an over-dispersed marginal distribution for the variable $Y_g^{(t)}$. The maximum likelihood estimation for parameters requires Laplace approximation and numerical integration (see the Chapter 4 of \cite{MuAn} for more details). This model is a special case of multilevel Poisson regression model and estimation can be performed with various statistical softwares such as HLM, SAS, Stata and R (with package \textbf{lme4}).

One may be interested to verify the need of a source of between-subject variation. Statistically, it is equivalent to testing the variance of $\gamma$ to be zero. In this particular case, the null hypothesis places $\sigma^2$ on the boundary of the model parameter space which complicates the evaluation of the asymptotic distribution of the classical likelihood ratio test (LRT) statistic. From the very general result of \cite{SL}, it can be demonstrated (see \cite{ZL}) that the asymptotic null distribution of the LRT statistic is a $50/50$ mixture of $\chi_0^2$ and $\chi_1^2$ as $\sum_g n_g \to \infty$. In this case, obtaining an equivalent macro-level model is of little practical interest since the construction of the variance-covariance matrix would require knowledge of the individual ("micro") data.

\section{Clustering and loss reserving models}\label{sec3}

A loss reserving macro-level model is constructed from data summarized in a table called run-off triangle. Aggregation is performed by occurrence and development periods (typically years). For occurrence period $i$, $i = 1, 2, \ldots, I$, and for development period $j$, $j = 1, 2, \ldots I$, let $C_{i,j}$ and $Y_{i,j}$ represent the total cumulative paid amount and the incremental paid amount, respectively with $Y_{i,j} = C_{i,j} - C_{i,j-1}$, $i = 1, \ldots, I$, $j = 2, \ldots, I$.

\begin{align*}
\begin{bmatrix}
    C_{1,1} & C_{1,2} & \ldots & C_{1,I-1} & C_{1,I}\\
    C_{2,1} & C_{2,2} & \ldots & C_{2,I-1} & \\
    \vdots & \vdots & \ddots & & \\
    C_{I,1}   &&&  
\end{bmatrix} 
\end{align*}

\begin{align*}
\begin{bmatrix}
    Y_{1,1} = C_{1,1} & Y_{1,2} & \ldots & Y_{1,I-1} & Y_{1,I}\\
    Y_{2,1} = C_{2,1} & Y_{2,2} & \ldots & Y_{2,I-1} & \\
    \vdots & \vdots & \ddots & & \\
    Y_{I,1} = C_{I,1}   &&&  
\end{bmatrix}
\end{align*}

where columns, rows and diagonals represent development, occurrence and calendar periods, respectively. Each incremental cell $Y_{i,j}$ can be seen as a cluster stacking $n_{i,j}$ amounts paid in the same development period $j$ for the occurrence period $i$. These payments come from $M$ claims and let $Y_{i,j}^{(k)}$ represent the sum of all observations from claims $k$ in the cluster $(i,j)$. It should be noted that all claims are not necessarily represented in each of the clusters.

To calculate a \emph{best estimate} for the reserve, the lower part of the triangle must be predicted and the total reserve amount is

\begin{align*}
\widehat{R} &= \sum_{t=2}^{I} \widehat{C}_{t, I} - \sum_{t=2}^{I} C_{t, I-t+1} = \sum_{t=2}^I\sum_{s=I+2-t}^I \widehat{Y}_{t,s}.
\end{align*}

To quantify uncertainty in estimated claims reserve, we consider the mean square error of prediction (MSEP). Let $\widehat{R}$ be a $\mathcal{Y}$-mesurable estimator for $\Esp{R|\mathcal{Y}}$ and a $\mathcal{Y}$-mesurable predictor for $R$ where $\mathcal{Y}$ represents the set of observed clusters. The MSEP is

\begin{align*}
    MSEP_{R|\mathcal{Y}}(\widehat{R}) &= \Esp{\left(\widehat{R} - R\right)^2|\mathcal{Y}}\\
        &= \Var{R|\mathcal{Y}} + \left(\widehat{R} - \Esp{R|\mathcal{Y}}\right)^2.
        \intertext{Independence between $R$ and $\mathcal{Y}$ is assumed, so the equation is simplified as follows}
    MSEP_{R|\mathcal{Y}}(\widehat{R}) &= \Var{R} + \left(\widehat{R} - \Esp{R}\right)^2
    \intertext{and the unconditional MSEP is}
    MSEP_{R}(\widehat{R}) &= \Var{R} + \Esp{\left(\widehat{R} - \Esp{R}\right)^2}.
\end{align*}

\subsection{The quasi-Poisson model for reserves}\label{ssec:qP}
\subsubsection{Construction}

From the theory presented in Subsection~\ref{ss:quasi}, we construct quasi-Poisson macro- and micro-level models for reserves. For both models, constitutive elements are defined in Table~\ref{tab:def1}.
\begin{table}[H]
\caption{Quasi-Poisson macro- and micro-level models for reserve ($i,j
  = 1, \ldots, I$). All clusters and all payments are independent.}
\centering
\begin{tabular}{lll}
\toprule    
    Components & Macro & Micro\\
\midrule    
    Exp. value & $\Esp{Y_{i,j}} = \lambda_{i,j}$ & $\Esp{Y_{i,j}^{(k)}} = \lambda_{i,j}$\\
    Inv. link func. & $\lambda_{i,j} = \exp[\bx_{i,j}\transpose\bb]$ & $\lambda_{i,j} = \exp[\bx_{i,j}\transpose\ba + \log(1/n_{i,j})]$\\
        & $\phantom{\lambda_{i,j}}= \exp[b_i + b_{I+j}]$& $\phantom{\lambda_{i,j}}= \exp[a_i + a_{I+j} + \log(1/n_{i,j})]$\\
        & with $b_{I+1} = 0$ & with $a_{I+1} = 0$\\
    Variance & $\Var{Y_{i,j}} = \varphi_{macro}\lambda_{i,j}$ & $\Var{Y_{i,j}^{(k)}} = \varphi_{micro}\lambda_{i,j}$\\
    Pred. value & $\widehat{Y}_{i,j} = \exp[\widehat{b}_i + \widehat{b}_{I+j}]$ & $\widehat{Y}_{i,j}^{(k)} = \exp[\widehat{a}_i + \widehat{a}_{I+j} + \log(1/n_{i,j})]$\\
    Known values & $\mathcal{Y}_{macro}$ & $\mathcal{Y}_{micro}$\\
\bottomrule
\end{tabular}
\label{tab:def1}
\end{table}
As a direct consequence of Proposition~\ref{propPoisson}, the best estimate for the total reserve amount is

\begin{align*}
\widehat{R} = \sum_{(i,j) \in \mathcal{K}}\sum_{k = 1}^{n_{i,j}} \widehat{Y}_{i,j}^{(k)} &= \sum_{(i,j) \in \mathcal{K}} \widehat{Y}_{i,j},
\end{align*}

where $\mathcal{K}$ represents unobserved clusters. For both models, the Proposition~\ref{prop:MSEP} gives results for the unconditional MSEP.
\begin{Proposition}\label{prop:MSEP}
In the quasi-Poisson macro-level model, the unconditional MSEP is given by

\begin{align*}
    \widehat{MSEP}_{R}(\widehat{R}) &\approx \sum_{(i,j) \in \mathcal{K}} \widehat{\varphi}_{macro}\widehat{y}_{i,j}\\
            &\phantom{=}+ \sum_{(i,j),(n,m) \in \mathcal{K}} \widehat{\varphi}_{macro}\widehat{y}_{i,j}\widehat{y}_{m,n}\bx_{i,j}\transpose\left(\bx \boldsymbol{W} \bx\transpose\right)^{-1} \bx_{n,m},
\end{align*}

where $\bx$ and $\boldsymbol{W}$ are defined by Equation~\eqref{eq:W}. The unconditional MSEP for the quasi-Poisson micro-level model is similar with $\widehat{\varphi}_{macro}$ replaced by $\widehat{\varphi}_{micro}$.
\end{Proposition}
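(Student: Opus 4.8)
The plan is to exploit the variance/bias decomposition of the unconditional MSEP recorded just before the statement, namely $MSEP_{R}(\widehat{R}) = \Var{R} + \Esp{(\widehat{R}-\Esp{R})^2}$, and to match its first summand to the single-sum \emph{process-variance} term and its second summand to the double-sum \emph{estimation-variance} term. First I would treat the process variance. Since $R = \sum_{(i,j)\in\mathcal{K}} Y_{i,j}$ and, by Table~\ref{tab:def1}, the clusters are mutually independent with $\Var{Y_{i,j}} = \varphi_{macro}\lambda_{i,j}$, independence gives $\Var{R} = \sum_{(i,j)\in\mathcal{K}} \varphi_{macro}\lambda_{i,j}$. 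Replacing $\varphi_{macro}$ and $\lambda_{i,j} = \exp[\bx_{i,j}\transpose\bb]$ by the estimators $\widehat{\varphi}_{macro}$ and $\widehat{y}_{i,j} = \exp[\bx_{i,j}\transpose\widehat{\bb}]$ produces the first term $\sum_{(i,j)\in\mathcal{K}} \widehat{\varphi}_{macro}\widehat{y}_{i,j}$.

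Next I would handle the estimation-error term $\Esp{(\widehat{R}-\Esp{R})^2}$, where $\widehat{R} = \sum_{(i,j)\in\mathcal{K}} \exp[\bx_{i,j}\transpose\widehat{\bb}]$ and $\Esp{R} = \sum_{(i,j)\in\mathcal{K}} \exp[\bx_{i,j}\transpose\bb]$. The key tool is the delta method: a first-order Taylor expansion of $\exp[\bx_{i,j}\transpose\widehat{\bb}]$ about $\bb$ gives $\widehat{Y}_{i,j} - \lambda_{i,j} \approx \lambda_{i,j}\,\bx_{i,j}\transpose(\widehat{\bb}-\bb)$, so that $\widehat{R}-\Esp{R} \approx \sum_{(i,j)\in\mathcal{K}} \lambda_{i,j}\,\bx_{i,j}\transpose(\widehat{\bb}-\bb)$. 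Squaring, taking expectations, and using (with the negligible-bias approximation noted after Corollary~\ref{cor:1}) that $\Esp{(\widehat{\bb}-\bb)(\widehat{\bb}-\bb)\transpose} \approx \Var{\widehat{\bB}}$, I obtain the double sum $\sum_{(i,j),(n,m)\in\mathcal{K}} \lambda_{i,j}\lambda_{n,m}\,\bx_{i,j}\transpose \Var{\widehat{\bB}}\, \bx_{n,m}$. Substituting the asymptotic covariance $\Var{\widehat{\bB}} \to \widehat{\varphi}_{macro}(\bx\boldsymbol{W}\bx\transpose)\inverse$ from Equation~\eqref{eq:W} and replacing each $\lambda$ by $\widehat{y}$ yields exactly the second term of the proposition.

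The micro-level claim then follows verbatim: the only changes are that Table~\ref{tab:def1} prescribes $\Var{Y_{i,j}^{(k)}} = \varphi_{micro}\lambda_{i,j}$ and Corollary~\ref{cor:2} supplies $\Var{\widehat{\bA}} \to \widehat{\varphi}_{micro}(\bx\boldsymbol{W}\bx\transpose)\inverse$, so every occurrence of $\widehat{\varphi}_{macro}$ becomes $\widehat{\varphi}_{micro}$ while $\bx\boldsymbol{W}\bx\transpose$ is left intact (it was shown to coincide across the two levels in Corollary~\ref{cor:1}(i)).

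The main obstacle — and the reason the conclusion carries $\approx$ rather than an equality — is the linearization step. The delta-method expansion discards the second-order curvature of the exponential link and uses only the leading-order asymptotic covariance of $\widehat{\bb}$, both valid solely as $\sum_g n_g \to \infty$; one must also argue that the estimator bias is negligible (by the remark following Corollary~\ref{cor:1}) and that the cross contribution $\Cov{R,\widehat{R}}$ vanishes, which is precisely the independence between the future clusters $\mathcal{K}$ and the observed data $\mathcal{Y}$ already built into the MSEP decomposition above. Making these approximation and independence hypotheses explicit, rather than the routine algebra of the two sums, is where the care is needed.
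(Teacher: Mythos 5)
Your proposal is correct and follows essentially the same route as the paper: the same decomposition $MSEP_{R}(\widehat{R}) = \Var{R} + \Esp{(\widehat{R}-\Esp{R})^2}$, with the process variance obtained from cluster independence and the estimation error linearized via the first-order expansion of the exponential link (your delta-method step is exactly the paper's $\exp[x]\approx 1+x$ approximation), then substituting the asymptotic covariance $\widehat{\varphi}\left(\bx\boldsymbol{W}\bx\transpose\right)\inverse$ from Equation~\eqref{eq:W}. The only cosmetic difference is direction: the paper cites \cite{MW} for the macro case and writes out the micro case (using $\widehat{\ba}=\widehat{\bb}$ and summing the $n_{i,j}$ per-payment variances with the $\log(1/n_{i,j})$ offset), whereas you write out the macro case and transfer it verbatim to the micro level --- the underlying computation is identical.
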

\begin{proof}
The proof for the macro-level model is done in \cite{MW}. For the micro-level model, we have

\begin{align*}
 &MSEP_{R}(\widehat{R}) = \Var{R} + \Esp{\left(\widehat{R} - \Esp{R}\right)^2}\\
 &= \sum_{(i,j) \in \mathcal{K}} \sum_{k=1}^{n_{i,j}} \widehat{\varphi}_{micro} \exp[\bx_{i,j}\transpose\widehat{\ba} + \log(1/n_{i,j})]\\
    &\phantom{=}+ \sum_{(i,j) \in \mathcal{K}}\sum_{(m,n) \in \mathcal{K}}\sum_{k = 1}^{n_{i,j}}\sum_{t=1}^{n_{m,n}}\Cov{\widehat{Y}_{i,j}^{(k)}, \widehat{Y}_{m,n}^{(t)}}\\
        &= \sum_{(i,j) \in \mathcal{K}}  \widehat{\varphi}_{micro} \exp[\bx_{i,j}\transpose\widehat{\ba}]\\
    &\phantom{=}+ \sum_{(i,j) \in \mathcal{K}}\sum_{(m,n) \in \mathcal{K}}\sum_{k = 1}^{n_{i,j}}\sum_{t=1}^{n_{m,n}} \exp[\bx_{i,j}\transpose\ba + \log(1/n_{i,j})]\exp[\bx_{m,n}\transpose\ba + \log(1/n_{m,n})]\\
    &\phantom{=}\times\Cov{\exp[\bx_{i,j}\transpose\widehat{\ba} - \bx_{i,j}\transpose\ba], \exp[\bx_{m,n}\transpose\widehat{\ba} -\bx_{m,n}\transpose\ba]}\\
    \intertext{Although $\widehat{Y}_{i,j}^{(k)}$ is not an unbiased estimator of $\Esp{Y_{i,j}^{(k)}}$, the bias is generally of small order and by using the approximation $\exp[x] \approx 1 + x$ for $x \approx 0$, we obtain}
     &= \sum_{(i,j) \in \mathcal{K}}  \widehat{\varphi}_{micro} \exp[\bx_{i,j}\transpose\widehat{\ba}]\\
    &\phantom{=}+ \sum_{(i,j),(m,n) \in \mathcal{K}}\exp[\bx_{i,j}\transpose\ba+\bx_{m,n}\transpose\ba]\Cov{\bx_{i,j}\transpose\widehat{\ba}, \bx_{m,n}\transpose\widehat{\ba}}.
    \intertext{By using the fact that $\widehat{\bb} = \widehat{\ba}$ and the remark at the end of subsection~\ref{ss:quasi}, we obtain}
    &= \sum_{(i,j) \in \mathcal{K}}  \widehat{\varphi}_{micro} \widehat{y}_{i,j} + \sum_{(i,j),(m,n) \in \mathcal{K}} \widehat{\varphi}_{micro}\widehat{y}_{i,j}\widehat{y}_{m,n} \bx_{i,j}\transpose\left(\bx \boldsymbol{W}\bx\transpose\right)^{-1}\bx_{m,n}.
\end{align*}
\end{proof}
Thus, the difference between the variability in macro- and micro-level models results from the difference between dispersion parameters. Define standardized residuals for both models

\begin{align*}
    r_{i,g} &= \frac{\left(y_{i,g} - \widehat{y}_{i,g}\right)}{\sqrt{\widehat{y}_{i,g}}} \text{ and }
    r_{g} = \frac{\left(y_{g} - \widehat{y}_{g}\right)}{\sqrt{\widehat{y}_{g}}}.
\end{align*}

Direct calculations lead to

\begin{align}\label{eq:psi}
\Psi = \frac{\sum_{i,g} r_{i,g}^2}{\sum_g r_g^2}  &\leq \frac{\sum_g n_g - (k+1)}{m - (k+1)} \to \widehat{\varphi}_{\text{micro}} \leq \widehat{\varphi}_{\text{macro}}.
\end{align}

Thus, if the total number of payments ($\sum_g n_g$) is greater than the value $\Psi(m - (k+1)) + k + 1$, then the micro-level model \eqref{eq3} will lead to a greater precision for the best estimate of the total reserve amount and conversely. Adding one or more covariate(s) at the micro level will decrease the numerator of $\Psi$ and will increase the interest of the micro-level model.
\subsubsection{Illustration and Discussion}
To illustrate these results, we consider the incremental run-off triangle from UK Motor Non-Comprehensive account (published by \cite{Chris}) presented in Table~\ref{tab:triangle1} where each cell $(i,j)$, $i + j \leq 7$, is assumed to be a cluster $g$, i.e., the value $Y_g$ is the sum of $n_g$ independent payments. 
\begin{table}[H]
\caption{Incremental run-off triangle for macro-level model (in
  $000$'s).}
\centering
\begin{tabular}{llllllll}
\toprule    
   & $1$ & $2$ & $3$ & $4$ & $5$ & $6$ & $7$\\
\midrule    
$1$ & $\numprint{3511}$ &$\numprint{3215}$ &$\numprint{2266}$ &$\numprint{1712}$  &$\numprint{1059}$ & $\numprint{587}$ & $\numprint{340}$\\
$2$ & $\numprint{4001}$ &$\numprint{3702}$ &$\numprint{2278}$ &$\numprint{1180}$  &$\numprint{956}$& $\numprint{629}$ & --\\
$3$ & $\numprint{4355}$ &$\numprint{3932}$ &$\numprint{1946}$ &$\numprint{1522}$  &$\numprint{1238}$ & -- & --\\
$4$ & $\numprint{4295}$ &$\numprint{3455}$ &$\numprint{2023}$ &$\numprint{1320}$  &-- & -- & --\\ 
$5$ & $\numprint{4150}$ &$\numprint{3747}$ &$\numprint{2320}$ & -- &-- & -- & --\\
$6$ & $\numprint{5102}$ &$\numprint{4548}$ & -- & -- &-- & -- & --\\
$7$ & $\numprint{6283}$ &-- & -- & --&-- & -- & --\\
\bottomrule
\end{tabular}
\label{tab:triangle1}
\end{table}
We construct $2$ macro-level models

\begin{align*}
\text{\textbf{Model A}: }Y_{g} &\sim \mathcal{P}(\lambda_{g}) \qquad \lambda_{g} = \exp[\bx_{g}\transpose\ba]\\
\text{\textbf{Model B}: }Y_{g} &\sim q\mathcal{P}(\lambda_{g})\intertext{and $2$ micro-level models}
\text{\textbf{Model C}: }Y_{i,g} &\sim \mathcal{P}(\lambda_{i,g}) \qquad \lambda_{i,g} =\exp[\bx_{g}\transpose\ba - \log(n_g)]\\
\text{\textbf{Model D}: }Y_{i,g} &\sim q\mathcal{P}(\lambda_{i,g}).
\end{align*}

The final reserve amount obtained from the Mack's model (\cite{Mack}) is $\numprint{28655773}\$$. To create micro-level datasets from the "macro" one, we perform the following procedure:
\begin{enumerate}%[leftmargin=*,labelsep=3mm]
\item simulate the number of payments for each cluster assuming $N_g \sim \mathcal{P}(\theta)$, $g = 1, \ldots, m$;
\item for each cluster, simulate a $(n_g \times 1)$ vector of proportions assuming $\boldsymbol{\omega}_g = \begin{bmatrix} \omega_1 & \ldots & \omega_{n_g}\end{bmatrix}\transpose \sim \text{Dirichlet}(\boldsymbol{1})$, $g = 1, \ldots, m$; 
\item for each cluster, define 

\begin{align*}
    \begin{bmatrix} Y_{1,g} \\ \vdots \\ Y_{n_g, g} \end{bmatrix} &= \floor{\boldsymbol{\omega}_gY_g}, \qquad g = 1,\ldots, m;
\end{align*}

\item adjust \textbf{Model C} and \textbf{Model D}; and
\item calculate the best estimate and the MSEP of the reserve.
\end{enumerate}
For each value of $\theta$, we repeat this procedure $\numprint{1000}$ times and we calculate the average best estimate and the average MSEP. 
Those results are consistent with Corollaries \ref{cor:1}, \ref{cor:2} and Proposition \ref{prop:MSEP}. For Poisson regression (\textbf{Model A} and \textbf{C}), results are similar. For micro-level models, convergence of $\sqrt{MSEP}$ fowards ($\numprint{11622}$) is fast. For quasi-Poisson regression (\textbf{Model B} and \textbf{D}), Figure~\ref{fig:res1} shows $\sqrt{MSEP}$ as a function of a expected total number of payments, for the portfolio. Above a certain level, (close to $\numprint{3400}$ here), accuracy of the "micro" approach exceed the "macro". 
 \begin{figure}[H]
    \centering
    \includegraphics[height=7cm,width=7cm]{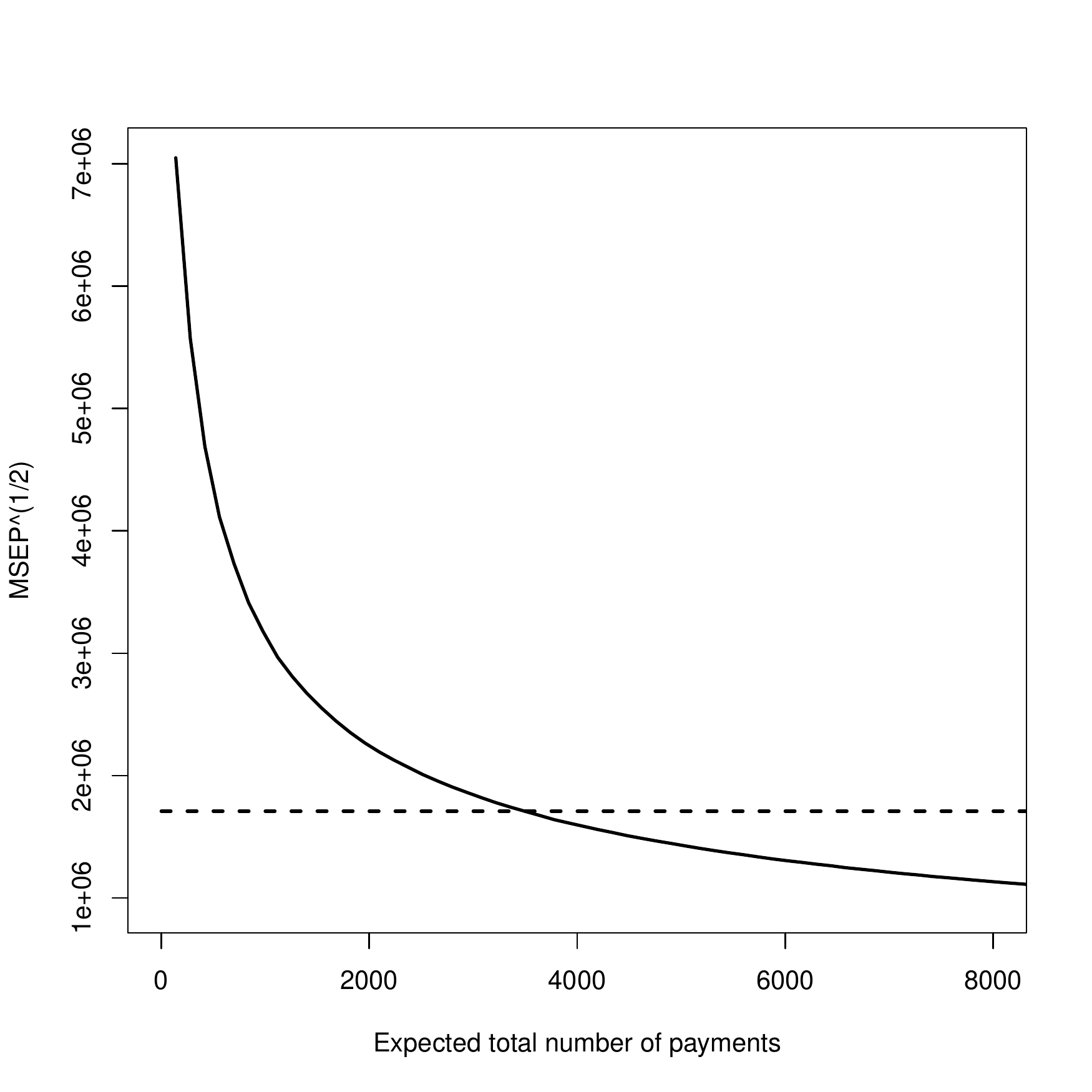}
    \caption{Square root of the mean square error of prediction obtained for \textbf{Model D} (solid line) and the \textbf{Model C} (broken line) from simulated values for increasing expected number of payments for the portfolio.}
    \label{fig:res1}
  \end{figure}
  
In order to illustrate the impact of adding a covariate at the
micro-level, we define a quasi-Poisson micro-level model with a weakly
correlated covariate (\textbf{Model E}) and with a strongly correlated
covariate (\textbf{Model F}). Following a similar procedure, we obtain
results presented in Table~\ref{tab:res1} and Figure~\ref{fig:res2}. 
\begin{table}[H]
\caption{Results.}
\centering
\begin{tabular}{lll}
\toprule    
Method & $\Esp{\text{Reserve}}$ & $\sqrt{MSEP}$\\
\midrule    
Mack's model & $\numprint{28655773}$ & $\numprint{1417267}$\\
\midrule
Poisson reg. & &\\
\textbf{Model A} & $\numprint{28655773}$ & $\numprint{11622}$\\
\textbf{Model C} & $\numprint{28655773}$ & $\numprint{11622}$\\
\midrule
quasi-Poisson reg. & &\\
\textbf{Model B} & $\numprint{28655773}$ & $\numprint{1708196}$\\
\textbf{Model D} & $\numprint{28655773}$ & see Figure \ref{fig:res1}\\
\midrule
\midrule
quasi-Poisson reg. & &\\
\textbf{Model E} ($\rho \approx 0$) & $\numprint{28657364}$ & see Figure \ref{fig:res2}\\
\textbf{Model F} ($\rho \approx 0.8$) & $\numprint{20514566}$ & see Figure \ref{fig:res2}\\
\bottomrule
\end{tabular}
\label{tab:res1}
\end{table}
As opposed to standard classical results on hierarchical models, the average of explanatory variable within a cluster ($(1/n_g)\sum_i x_{ig}$) has not been added to the macro-level model (\textbf{Model B}), for several reasons,
\begin{itemize}%[leftmargin=*,labelsep=4mm]
    \item[\textbf{(i)}] impossible to compute that average without individual data;
    \item[\textbf{(ii)}] discrete explanatory variables used in the micro-level model; and
    \item[\textbf{(iii)}] since claims reserve model have a predictive motivation, it is risky to project the value of an aggregated variable on future clusters.
\end{itemize}
\begin{figure}[H]
    \centering
    \includegraphics[height=7cm,width=7cm]{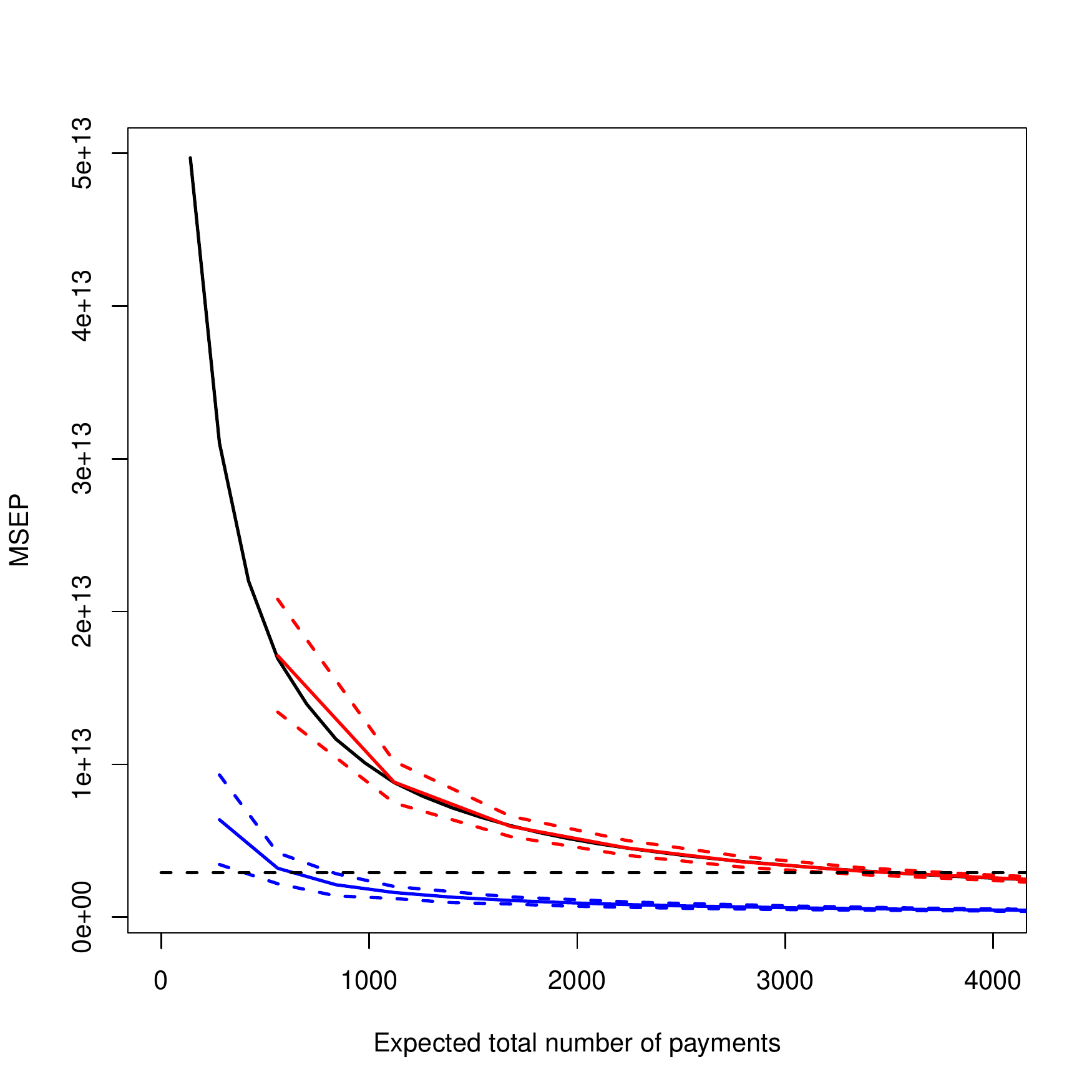}
    \caption{Mean square error of prediction ($\pm 2\sigma$) obtained from simulated values as a function of the expected number of payments for \textbf{Model E} (red lines) and \textbf{Model F} (blue lines). For comparison purposes, the MSEP obtained for the \textbf{Model D} (solid black line) and the \textbf{Model B} (broken black line) are added.}
    \label{fig:res2}
  \end{figure}
With an explanatory variable highly correlated with the response variable, results obtained with \textbf{Model D} and \textbf{E} are very close. As claimed by Proposition \ref{prop:MSEP} and equation \eqref{eq:psi}, an explanatory variable highly correlated with the response variable will decrease the value of $\sqrt{MSEP}$, and lowers the threshold above which the micro-level model is more accurate than the macro-level one.

The quasi-Poisson macro-level model (\textbf{Model B}) with maximum likelihood estimators leads to the same reserves as the chain-ladder algorithm and the Mack's model (see \cite{MackVenter}), assuming the clusters exposure, for $(i,j) \in \mathcal{K}$, is one. To obtain similar results with a quasi-Poisson micro-level model (\textbf{Model D}), a similar assumption is necessary: exposure of each claim within cluster $(i,j)$ is $1/n_{i,j}$. That assmption implies, on a micro level, that predicted individual payments $\widehat{Y}_{ij}^{(k)}$ are proportional to $1/n_{ij}$. That assumption has unfortunately no foundation. 

In the Poisson and quasi-Poisson micro-level models (\textbf{Model C} and \textbf{D}),
payments related to the same claim, in two different clusters are
supposed to be non-correlated. As discussed in the previous Section,
it is possible to include dependencies among payments for a given
claim using a Poisson regression with random effects. Simulations and computations were performed in \textsf{R}, using packages \textsf{ChainLadder} and \textsf{gtools}. 

\subsection{The Mixed Poisson model for reserves}

\subsubsection{Construction}

From the results obtained in Section \ref{ss:RE}, it is possible to
construct a micro-model for the reserves that includes a random intercept.The later will allow to model dependence between payments from a given claim.
Note that it is hard to find an aggregated model with random effects that could be compared with individual ones.
In the context of claims reserves, $Y_g^{(t)}$
represents the sum of paids made for claim $t$ within cluster $g$. The
assumptions of that model (called \textbf{model G}) are

  \begin{align*}
    \left(Y_g^{(t)} | \gamma_t\right) &\sim \mathcal{P}(\lambda_{g}e^{\gamma_t}), \qquad
    \lambda_{g} = \exp[\boldsymbol{x}_g\transpose\boldsymbol{c}
    +ln(1/n_g)]\\
    \gamma_t &\sim N(0, \sigma^2).
  \end{align*}

Because of those two random variables in the model, two kinds of
predictions can be derived: un-conditional ones, where

  \begin{align*}
    \left(\widehat{Y}_g^{(t)}| \gamma_t\right) &\sim
    \mathcal{P}(\widehat{\lambda}_ge^{\gamma_t})\\
    \widehat{\lambda}_g &= \exp[\boldsymbol{x}_g\transpose\widehat{\boldsymbol{c}}+ln(1/n_g)]\\
    \text{ so that } \Esp{\widehat{Y}_g^{(t)}} &= \lambda_ge^{\sigma^2/2};
  \end{align*}

and conditional ones, where the unknown magnitude of claim $t$ is
predicted by the so-called best linear estimate (that minimizes the
MSEP) $\tilde{\gamma}_t$ (see \cite{MW})

  \begin{align*}
   \left(\tilde{Y}_g^{(t)}| \tilde{\gamma}_t\right) &\sim
    \mathcal{P}(\widehat{\lambda}_ge^{\tilde{\gamma}_t})\\
    \text{ so that } \Esp{\tilde{Y}_g^{(t)}} &= \lambda_ge^{\tilde{\gamma}_t}.
  \end{align*}

It is then possible to compute the overall best estimate for the total amount of reserves.

\subsubsection{Illustration and Discussion}

In order to construct a micro-level model from triangle \ref{tab:triangle1}, we follow a procedure to the one described in the provious section, with steps 1--3 (that are not mentioned here)
  \begin{enumerate}%[leftmargin=*,labelsep=3mm]
  \item[4.] for each accident year, allocate randomly the source of each payment
  \item[5.] fit \textbf{model G}; and 
  \item[6.] compute the best estimate and the MSEP of the reserve.
  \end{enumerate}
For a fixed value of $\theta$, the procedure is repeated $1000$
times. Various values were considered for $\theta$ ($10$, $25$, $50$, $100$ and $250$),  and results were similar. In order to avoid heavy tables, only the case where $\theta = 10$ is mentioned here. Simulations and computations were performed with \textsf{R}, relying on package \textsf{lme4}. On Figure \ref{fig:val1} we can see predictions of the model on observed data, while on Figure \ref{fig:val2} we can see predictions of the model for non-observed cells. Finaly, results are reported in Table \ref{tab:val1}. At each step, a LRT is performed (see section \ref{ss:RE}) and each time, the variance at origin was significant non-null, meaning that correlation among payments (related to the same claim) is positive. Observe that with the random model, the log-likelihood is approximated using numerical intergration, which might bias computed $p$-values of the test. Here, $p$ have been confirmed using a bootstrap procedure (using package \textsf{glmmML}).
\begin{figure}[H]
    \centering
    \includegraphics[height=7cm,width=7cm]{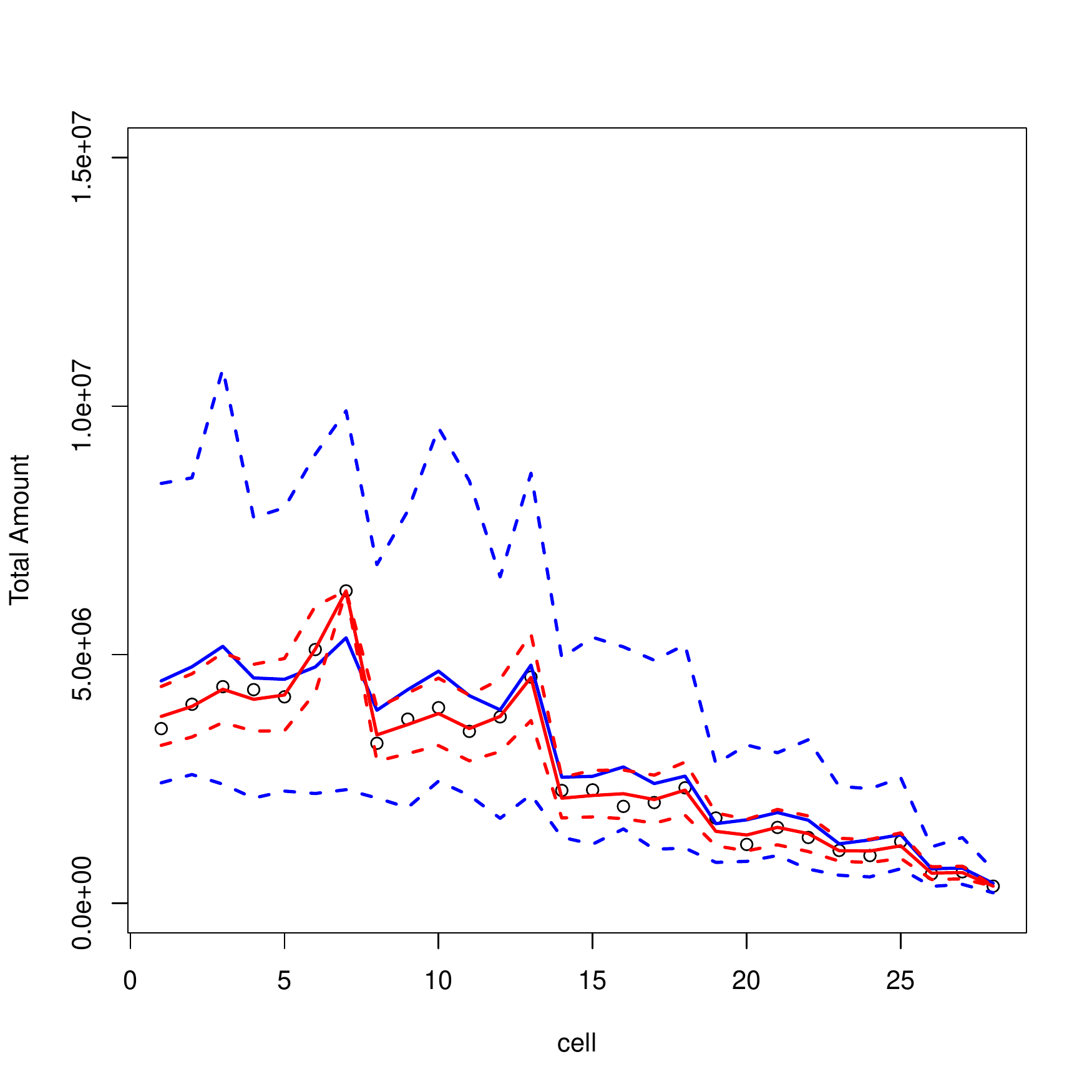}
    \caption{Observed data (circles) with conditional prediction
      (plain red lines) and un-conditional (plain blue lines) from
      \textbf{model G} ($\theta = 10$).}
    \label{fig:val1}
  \end{figure}
\begin{figure}[H]
    \centering
    \includegraphics[height=7cm,width=7cm]{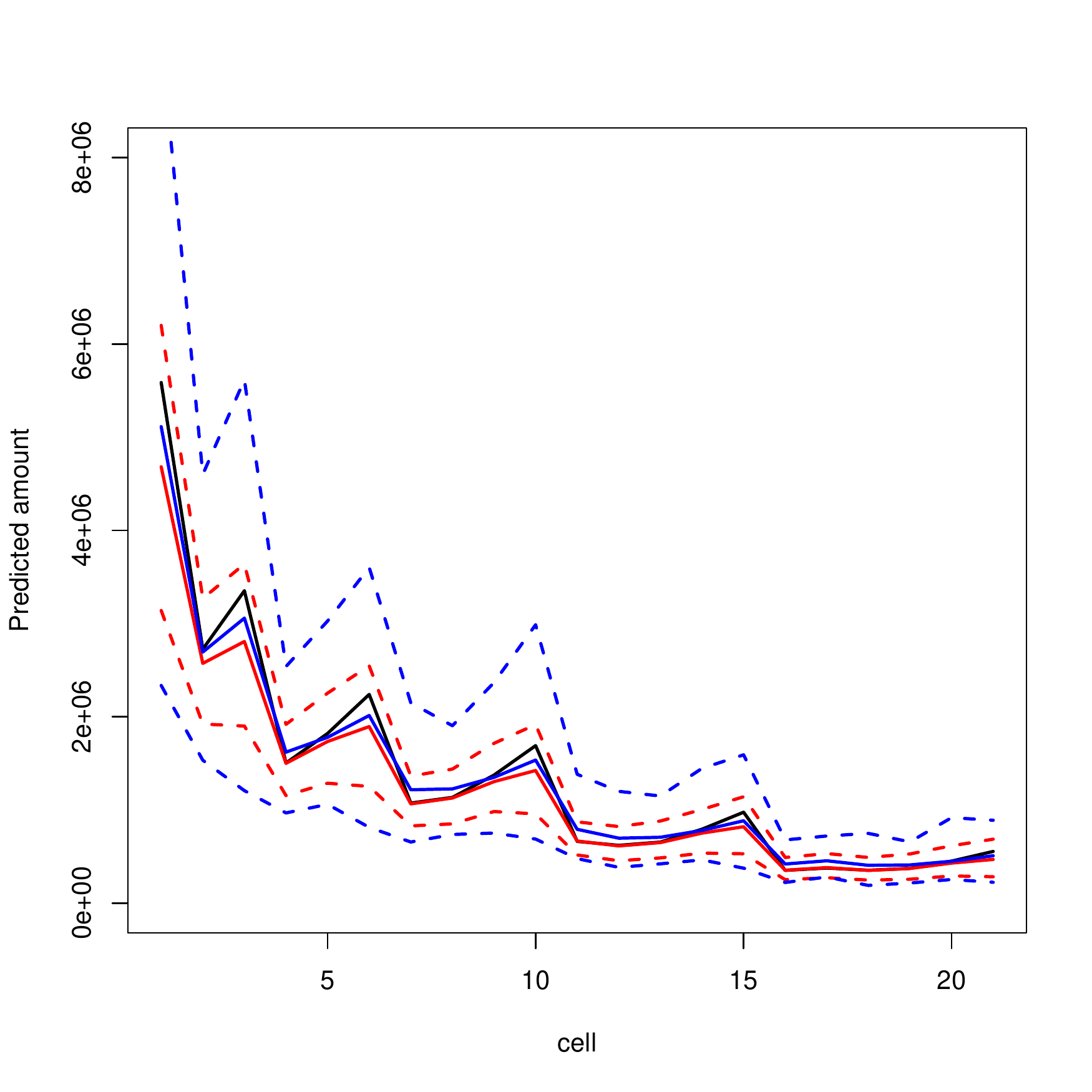}
    \caption{Predictions with the quasi-Poisson macro-level model (strong
      black line), with conditional prediction (plain red lines) and
      un-conditional (plain blue lines) from \textbf{model G} ($\theta = 10$).}
    \label{fig:val2}
  \end{figure}
\begin{table}
\caption{Numerical Results for $\theta = 10$. Results for different
  values of $\theta$ are similar.}
\centering
\begin{tabular}{lll}
\toprule    
Modèle & $\Esp{\text{Reserve}}$ & $\sqrt{\text{Var(Reserve)}}$\\
\midrule 
coll. quasi-Pois. & $\numprint{28656423}$ & $\numprint{1708216}$\\
mixed Poisson non-cond. & $\numprint{27930624}$ & $\numprint{3297401}$\\
mixed Poisson cond. & $\numprint{25972947}$ & $\numprint{2280902}$\\
\bottomrule
\end{tabular}
\label{tab:val1}
\end{table}

%%%%%%%%%%%%%%%%%%%%%%%%%%%%%%%%%%%%%%%%%%
%\acknowledgments{The first author received financial support from NSERC and the ACTINFO chair. The second author received financial support from NSERC.}

%%%%%%%%%%%%%%%%%%%%%%%%%%%%%%%%%%%%%%%%%%

%\authorcontributions{For research articles with several authors, a short paragraph specifying their individual contributions must be provided. The following statements should b%e used ``X.X. and Y.Y. conceived and designed the experiments; X.X. performed the experiments; X.X. and Y.Y. analyzed the data; W.W. contributed reagents/materials/analysis too%ls; Y.Y. wrote the paper.'' Authorship must be limited to those who have contributed substantially to the work reported.}

%%%%%%%%%%%%%%%%%%%%%%%%%%%%%%%%%%%%%%%%%%

%\conflictofinterests{The authors declare no conflict of interest.} 

%%%%%%%%%%%%%%%%%%%%%%%%%%%%%%%%%%%%%%%%%%
%\bibliographystyle{mdpi}

%=====================================
% References, variant A: internal bibliography
%=====================================
%\renewcommand\bibname{References}

%%%%%%%%%%%%%%%%%%%%%%%%%%%%%%%%%%%%%%%%%%

\end{document}